\newcommand{\change}[1]{\textcolor{black}{#1}}
\newcommand{\steph}[1]{\textcolor{black}{#1}}
\renewcommand{\ket}[1]{|{#1}\rangle}
\newcommand{\tn}{\textnormal}
\newcommand{\dbar}[1]{\bar{\bar{#1}}}
\definecolor{cinnabar}{rgb}{0.89, 0.26, 0.2}
\definecolor{forestgreen}{rgb}{0.13, 0.55, 0.13}
\newcommand{\ctr}{\hat{\mathcal{C}}}
\theoremstyle{definition}
\newtheorem{definition}{Definition}
\newtheorem{property}{Property}
\newtheorem{protocol}{Protocol}
\newtheorem{outline}{Outline}
\theoremstyle{plain}
\newtheorem{lemma}{Lemma}
\declaretheorem[name=Theorem]{thm}
\newtheorem{corollary}{Corollary}
\newcolumntype{C}[1]{>{\centering\let\newline\\\arraybackslash\hspace{0pt}}m{#1}}
\begin{document}

\title{Secure multi-party quantum computation with few qubits}
\vspace{2em}

\author{Victoria Lipinska}\email{v.lipinska@tudelft.nl}
\affiliation{QuTech, Delft University of Technology, Lorentzweg 1, 2628 CJ Delft, The Netherlands}
\affiliation{Kavli Institute of Nanoscience, Delft University of Technology, Lorentzweg 1, 2628 CJ Delft, The Netherlands}
\author{J\'{e}r\'{e}my Ribeiro}
\affiliation{QuTech, Delft University of Technology, Lorentzweg 1, 2628 CJ Delft, The Netherlands}
\affiliation{Kavli Institute of Nanoscience, Delft University of Technology, Lorentzweg 1, 2628 CJ Delft, The Netherlands}
\author{Stephanie Wehner}
\affiliation{QuTech, Delft University of Technology, Lorentzweg 1, 2628 CJ Delft, The Netherlands}
\affiliation{Kavli Institute of Nanoscience, Delft University of Technology, Lorentzweg 1, 2628 CJ Delft, The Netherlands}

\begin{abstract}
We consider the task of secure multi-party distributed  quantum computation on a quantum network.  We propose a protocol based on quantum error correction which reduces the number of necessary qubits. That is, each of the $n$ nodes in our protocol requires an operational workspace of $n^2 + 4n$ qubits, as opposed to previously shown $\Omega\big((n^3+n^2s^2)\log n\big)$ qubits, where $s$ is a security parameter. Additionally, we reduce the communication complexity by a factor of $\mathcal{O}(n^3\log(n))$ qubits per node, as compared to existing protocols. 
To achieve universal computation, we develop a distributed procedure for verifying magic states, which allows us to apply distributed gate teleportation and which may be of independent interest. We showcase our protocol on a small example for a 7-node network. \\
\indent NOTE FROM AUTHORS: This is a version with an erratum appended, see below. The implemented changes increase the operational qubit workspace per node from $n^2 + 4n$ to $n^2+\Theta(s) n$, where $s$ is the security parameter. The increase is linear in $n$, which means that the main result of our paper remains intact: number of qubits per node necessary to implement the multiparty quantum computation {is still smaller than} the previously existing protocols. {Moreover, the security proof in our manuscript does not change.}
\end{abstract}

\maketitle

\section{Introduction}

Secure multi-party computation is a task which allows $n$ nodes of a network to jointly compute a function on their inputs \cite{Yao1982}. The inputs are private, meaning that they are only known to the nodes who supplied them. What is more, the only information that can be inferred about the private inputs is whatever can be inferred from the outputs of the computation and the computation itself. Multi-party computation allows for distributed evaluation of any function, and hence it is a powerful cryptographic primitive with many practical (e.g.  clearing a commodity derivative market) and theoretical (e.g. zero knowledge proofs) applications \cite{Cramer2015}.

In the domain of quantum computation the problem of multi-party quantum computation (MPQC) on quantum data was first introduced by \cite{Crepeau2002}. It can be defined as follows: each node $i = 1,\dots,n$ gets one, possibly unknown, input quantum state $\rho_i$. The nodes jointly perform an $n$-input arbitrary quantum circuit $\mathfrak{R}$ on their inputs $\rho_1,\dots, \rho_n$. The output of the circuit is divided into $n$ parts and each node $i$ gets $i$-th part of the output state, see Figure \ref{fig:schematicR}. In MPQC there can be nodes who do not follow the protocol (cheaters). We then require that an MPQC protocol satisfies the following informal requirements:
\begin{itemize}[topsep=0pt,itemsep=-1ex,partopsep=1ex,parsep=1ex]
\item (Correctness) If there are no cheaters, then the protocol implements the intended circuit $\mathfrak{R}$ on the inputs of the nodes.
\item (Soundness) Cheaters cannot affect the outcome of the computation of the other nodes, beyond their ability to choose their own inputs.
\item (Privacy) Cheaters do not learn anything about private inputs and outputs of the other nodes.
\end{itemize}
Throughout this paper we will consider that an input $\rho_i$ of each node is a single-qubit state.

\begin{figure}[b]
\ffigbox{%
 \includegraphics[width=0.6\textwidth]{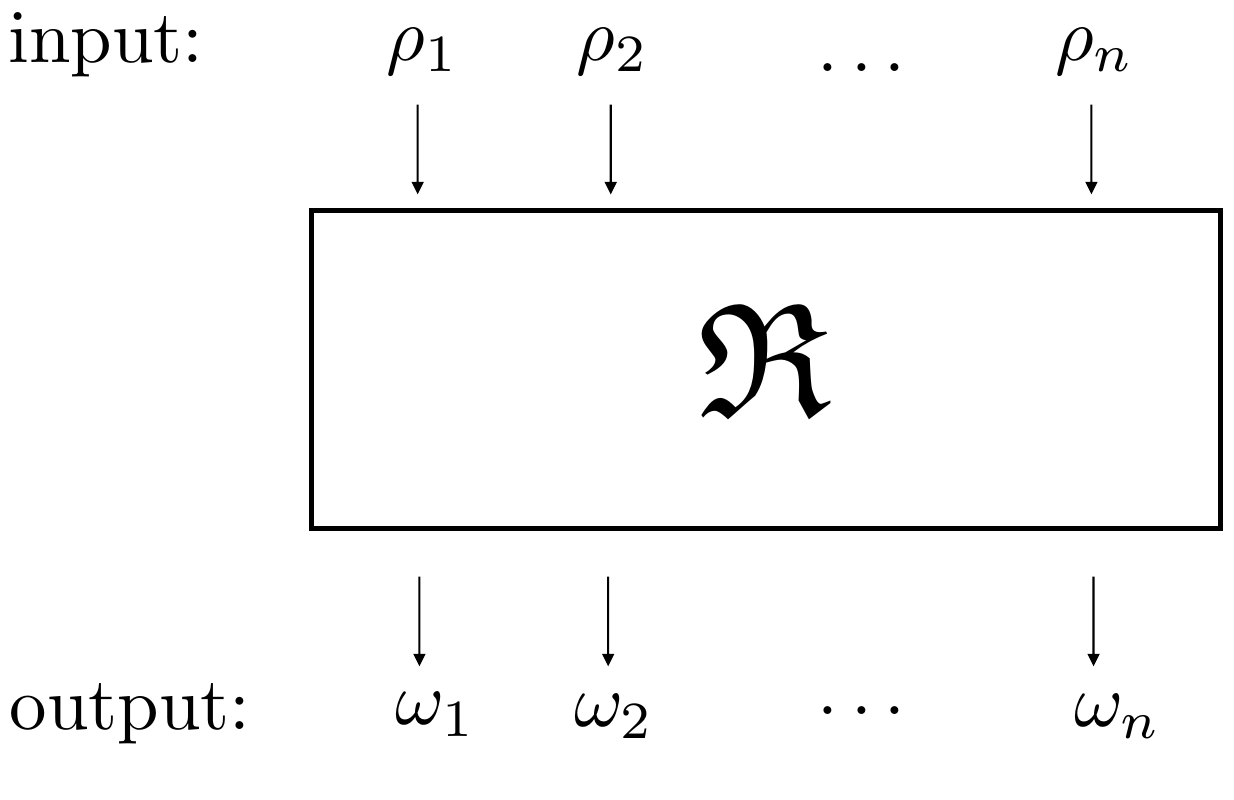}%
}{
\caption{Each of the nodes $1,\dots,n$ provides a single-qubit input $\rho_1,\dots,\rho_n$. The goal of the multi-party quantum computing (MPQC) protocol is to implement circuit $\mathfrak{R}$ such that each node gets an output $\omega_1,\dots,\omega_n$ without gaining any knowledge of the other inputs or outputs beyond their ability to choose their own inputs. Note that the inputs (and outputs) can be entangled.}\label{fig:schematicR}
}
\end{figure}

The approach taken by the original work of \cite{Crepeau2002} is based on a subroutine called verifiable quantum secret sharing and is a generalization of a classical multi-party computation \cite{Chaum1988_unconditionally}. The security achieved by the protocol is information theoretical, meaning that the cheaters are not constrained by computational assumptions. However, the number of cheaters has to be strictly smaller than $\frac{n}{6}$. This bound was later lifted to $\frac{n}{2}$ by \cite{Crepeau2005}, who used authentication schemes and approximate error correction. However this solution requires significantly more qubits to be realized.
At the same time, there exist parallel approaches tolerating a cheating majority and whose security relies on computational assumptions, for example \cite{Dupuis2012} for the case of $n=2$ or its recent generalization to $n>2$ \cite{Dulek2019}. 
Note that a protocol tolerating more than $\frac{n}{2}$ cheaters is not possible without additional computational assumptions, since that would imply the existence of unconditionally secure bit commitment \cite{Mayers1997,Lo1998}.

\begin{table*}[t]

\caption{Summary of qubit savings presented in this paper, $s$ denotes the security parameter of the protocol, $\#\tn{ancillas}$ denotes the number of ancillas in circuit $\mathfrak{R}$, and $\# T$ denotes the number of $T$ gates. The size of the workspace in our protocol does not depend on the security parameter, because of the sequential execution of the verification phase, see Section \ref{subsec:VHSS}. Note that here we do not list the work of \cite{Dulek2019}, since their protocol does not use techniques based on error correction and achieves computational security guarantees. }\label{tab:qubit_savings}
\begin{tabular}{|c|c|c|}
\hline 
&Our protocol & Crepeau et al. \cite{Crepeau2002} \\ 
\hline 
size of the input in qubits per node & $1$ & $\Omega(\log n)$ \\ \hline
size of an individual share during the computation & \multirow{2}{*}{$1$}  & \multirow{2}{*}{$\Omega(\log n)$} \\ 
in qubits per node & & \\ \hline
$\#$ qubits in workspace per node &$n^2 + 4n$& $\Omega\big((n^3+n^2s^2)\log n\big)$ \\ \hline
$\#$ qubits sent per node & $\mathcal{O}\big((n + \#\tn{ancillas} + \# T)ns^2 \big)$ & $\mathcal{O} \big((n^2 + \#\tn{ancillas})n^3s^2\log(n)\big)$\\
\hline 
\end{tabular} 
\end{table*}

In this work we are interested in the former approach to MPQC, namely the one based on verifiable quantum secret sharing of \cite{Crepeau2002}. Our objective is to perform MPQC on a quantum network with $n$ nodes using as few qubits as possible. The approach we take is based on \cite{Crepeau2002} and extensively relies on techniques from fault-tolerant quantum error correction. It can be intuitively understood as follows. Nodes use a chosen quantum error correcting code and create a global logical state $\bar{\Psi}$ by encoding each of the single-qubit input states. Each node holds a part of this logical state, we call such a part a \textit{share}. They verify the encoding of each  state using verifiable secret sharing protocol and perform local operations to evaluate a logical version of the circuit $\mathfrak{R}$, and then locally reconstruct their outputs. 

To be able to apply any circuit $\mathfrak{R}$ this way, we need two properties. First, $\mathfrak{R}$ needs to be composed of gates which form a universal set, i.e. any circuit can be decomposed into gates from that set. Second, if the nodes apply only local operations $\Lambda$ from the universal set, it should yield a meaningful logical operation $\bar{\Lambda}$ for the global state $\bar{\Psi}$. This property is called \emph{transversality}. However, for any error correcting code, it is impossible to perform universal quantum computation using only transversal gates \cite{Eastin2009}. For this reason, it is common to extend a transversal set of gates (for example Clifford gates) with a non-transversal gate (for example $T$ gate or the Toffoli gate). Note that there exist methods to realize single non-transversal gates in a distributed way, for example by using ancilla states \cite{Gottesman1999} or locally modifying the error correcting code \citep{Aharonov1997}.

In particular, \cite{Crepeau2002} considers quantum polynomial codes and a universal set of gates with the Toffoli gate \citep{Aharonov1997}. This solution is very expensive in qubits. Firstly, the polynomial codes require local shares whose dimension scales with the number of nodes, and therefore require $\Omega(\log n)$ qubits per share. Moreover, the nodes need to perform a distributed encoding of the shares in order to apply the Toffoli gate. This means that each  input state must be encoded three times using the polynomial code. Performing the three-level encoding serves one more purpose, namely, it localizes all of the errors  in the encoding to the positions of the cheaters. As a result, the cheaters cannot force the protocol to abort, since any error they introduce will always be corrected by the underlying polynomial code. All in all, each node needs an operational workspace of $\Omega\big((n^3+n^2s^2)\log n\big)$ qubits, where $s$ is the security parameter of the protocol, see Table \ref{tab:qubit_savings}. We remark that in schemes based on exact error correcting codes, the number of cheaters $t$ is intrinsically constrained by the distance $d$ of the underlying code as $t\leq \left\lfloor \frac{d-1}{2} \right\rfloor$, which in principle can reach $\frac{n}{4}$ \cite{Rains1997,Grassl2004}. However, the technique for applying the Toffoli gate in \cite{Crepeau2002} puts a constraint on the number of cheaters to~$\frac{n}{6}$.

Since near-term quantum networks will be able to support only a small number of qubits, it would be preferable to implement an MPQC protocol with as few qubits as possible. So far, reducing quantum resources has received a lot of attention in the domain of non-distributed quantum computation and simulation, see for example \cite{Bravyi2016,Steudtner2018,Moll2016,Bravyi2017,Peng2019}. 
Recently, in \cite{Lipinska2019} we considered a problem of reducing quantum resources 
for a distributed protocol, namely verifiable secret sharing of a quantum state. 
Here we address a similar issue of whether distributed multi-party quantum computation can be performed on a quantum network with less quantum resources. We answer this question positively by proposing a scheme for universal distributed computation which uses fewer qubits as compared to the existing approach of \cite{Crepeau2002} outlined above.

This paper is organized as follows. In Section \ref{sec:results} we summarize our contributions, where in \ref{sec:resources} we discuss the implications of our protocol on resource reduction and in \ref{sec:example} we give an explicit example of the protocol on a 7-node network. In Section \ref{sec:methods} we zoom in on the technical aspects of our work. There we present the protocol in detail and provide formal security statements. We leave out technical proofs for Appendix \ref{app:proofs}.

\section{Results}\label{sec:results}

We propose a protocol for secure multi-party quantum computation where each node holds single-qubit shares. Our approach is based on quantum error correcting codes, similar to the idea of \cite{Crepeau2002,Smith2001,Crepeau2005}. Since our interest lies in reducing the quantum resources necessary to realize the protocol, we abandon the original idea of three-level encoding at the cost of allowing the protocol to abort if the initial encoding of the shares is incorrect. Thanks to this, we are able to execute the protocol with less qubits in the workspace per node and lower communication complexity, see Table \ref{tab:qubit_savings}. Moreover, we develop a procedure for a distributed verification of any logical state which is stabilized by a Clifford gate. This allows us to perform distributed gate teleportation and implement a universal set of gates without creating three levels of encoding. What is more, we follow the approach outlined in \cite{Lipinska2019} which allows for a sequential execution of the verification of the inputs. This solution reduces the operational workspace to $n^2 + 4n$ qubits per node.  We show that our protocol is secure in the presence of active non-adaptive cheaters (see Adversary model), where the number of cheaters is constrained by the distance $d$ of the underlying error correcting code, i.e. $t\leq \left\lfloor \frac{d-1}{2} \right\rfloor$. 
Finally, we showcase our protocol on a small example for 7 nodes using Steane's 7-qubit code \cite{Steane1996}.

The key to our results is using error correcting codes which encode a single qubit into $n$ single qubits. Furthermore, we allow the MPQC protocol to abort if there are too many errors detected during the execution of the protocol. It is, however, possible to execute our protocol without the abort event, see Section \ref{sec:outlook}. This solution requires many more rounds of communication and we do not consider this approach explicitly. We develop a distributed verification procedure for magic states which allows us to implement gate teleportation in a distributed way.  Lastly, we verify the input of each node using ancillas distributed in a sequential way. We elaborate on these techniques in the next section, Section \ref{sec:resources}.

\begin{framed}
\begin{outline}[Multi-party quantum computation]~\\
Input: single-qubit state $\rho_i$ from each node, CSS code $\ctr$ with transversal Cliffords, circuit $\mathfrak{R}$.
\begin{enumerate}
\item \textit{Sharing and verification}\\
Each node $i=1,\dots,n$ encodes her input $\rho_i$ using code $\ctr$ into an $n$-qubit logical state, and sends one qubit (i.e. one single-qubit share) of the logical state to every other node, while keeping one for herself.
The nodes jointly verify the encoding done by node $i$ using verifiable quantum secret sharing protocol (see Protocol~\ref{prot:vhss}).
\item \textit{Computation}
\begin{itemize}[topsep=0pt,itemsep=-1ex,partopsep=1ex,parsep=1ex]
\item For every Clifford gate in circuit $\mathfrak{R}$:\\
The nodes apply transversal Clifford gates locally to qubits specified by the circuit $\mathfrak{R}$.\\
\item For every $T$ gate in circuit $\mathfrak{R}$ applied to qubit $i$:\\
Node $i$ prepares the magic state $\ket{m} = \frac{1}{\sqrt{2}} (\ket{0} + e^{i\frac{\pi}{4}}\ket{1})$. The nodes verify it using Verification of Clifford-Stabilized States protocol, see Protocol \ref{prot:verification_stabilizer}. If the verification is successful, the nodes perform Distributed Gate Teleportation, see Protocol \ref{prot:gate_teleport}. 
\end{itemize}
Every $\ket{0}$ ancilla state required for circuit $\mathfrak{R}$, which is prepared by node $i$, is jointly verified by the nodes using verifiable quantum secret sharing, Protocol \ref{prot:vhss}.

If the verification of any step fails the nodes substitute their shares for $\ket{0}$ and abort the protocol at the end of the computation.

\item \textit{Reconstruction}\\
Each node $i$ collects all shares of her part of the output. She corrects errors using code $\ctr$ and reconstructs her output.

\end{enumerate}

\end{outline}
\end{framed}

\vspace{1em}
\textbf{Network model.} We consider a quantum network with $n$ nodes. Each node can locally process $\mathcal{O}(n^2)$ qubits, and can perfectly process and store classical information. Each pair of nodes is connected via private and authenticated classical \cite{Canetti2004} and quantum \cite{Barnum2002} channels. Additionally, we assume that the nodes have access to an authenticated classical broadcast channel \cite{Canetti1999} and a public source of randomness. Note that a source of randomness can be created, for example, by running a classical multi-party computation protocol \cite{Rabin1989}.

\textbf{Adversary model.} We say that $t$ out of $n$ nodes are \emph{active} cheaters during the protocol. This means that they can act maliciously throughout the entire execution of the multi-party computation and perform arbitrary joint quantum operations on their shares, possibly with quantum side information. Therefore, the security of our protocol does not rely on computational assumptions. 
We assume that the active cheaters are \emph{non-adaptive}, meaning that they are determined prior to the beginning of the protocol and stay fixed throughout its execution. On the other hand, the nodes which follow the protocol exactly are \emph{honest}.  A protocol \emph{tolerates} the presence of $t$ active cheaters if they cannot influence the output of the protocol beyond choosing their own inputs.

\subsection{Techniques}\label{sec:resources}

Thanks to using single-qubit error correcting codes, distributed verification of magic states, the possibility to abort the protocol and sequential verification of the inputs, our MPQC protocol lowers the number of qubits that each node needs to control and send. Here we discuss in detail all the reductions made by our protocol. Then, we give an explicit example of a protocol based on the 7-qubit Steane's code.

\begin{itemize}
\item \textbf{Single-qubit CSS codes.} We consider a class of Calderbank-Shor-Steane (CSS) error correcting codes \cite{Steane1996,Calderbank1996}, which encode a single logical qubit into $n$ physical qubits, and for which applying Clifford gates is transversal, see Section \ref{subsec:CSS} for details. In particular, this means that each input state and each encoded ancilla is encoded and distributed using single-qubit shares. For comparison, the protocol of \cite{Crepeau2002} uses a class of polynomial codes, called Reed-Solomon codes \cite{Aharonov1997}, where the size of individual share grows with the number of nodes $n$ in the network as $\Omega(\log n)$ qubits.

\item \textbf{MPQC with abort.} We introduce an ``abort'' event in the MPQC protocol. That is, the protocol aborts if there are more than $t$ errors introduced by the cheaters,
accumulated over all inputs. This condition is necessary, since applying a transversal gate between different logical inputs can still propagate errors between them. As a result, we are able to perform the MPQC protocol on the two-level encoding created by the verifiable quantum secret sharing (VQSS) subroutine, see Section \ref{sec:subroutines}. This allows us to achieve a lower communication complexity -- in our protocol each node sends $\mathcal{O}\big((n + \#\tn{ancillas} + \# T)ns^2 \big)$ qubits, as opposed to $\mathcal{O} \big((n^2 + \#\tn{ancillas})n^3s^2\log(n)\big)$ qubits in \cite{Crepeau2002}, where $s$ denotes the security parameter of the protocol, $\#\tn{ancillas}$ denotes the number of ancillas in circuit $\mathfrak{R}$, and $\# T$ denotes the number of $T$ gates. Note that in our protocol we can avoid the abort event by creating the third level encoding, following the idea of \cite{Crepeau2002}. This approach confines the errors of all inputs only to the positions of $t$ cheaters, see Section \ref{sec:outlook} for discussion. However, this solution significantly increases quantum communication complexity. Since our objective is to reduce the number of qubits, we do not consider this approach here. 

\item \textbf{Verification of Clifford-stabilized states.} We develop a distributed method for verifying states stabilized by the Clifford gates, which in particular can be applied to verify magic states. This solution allows us to perform distributed gate teleportation and apply the $T$ gate in a distributed way. Recall that for our MPQC protocol we choose CSS codes with transversal Clifford gates. This, together with distributed gate teleportation and transversal measurements, provides a way to apply a universal set of gates in a distributed way. Thanks to using magic state ancillas, we can perform the computation on a two-level encoding created during the verification phase (see Protocol \ref{prot:mpqc}). This means that each node controls $n^2$ single-qubit shares of all inputs. In contrast, in the approach of \cite{Crepeau2002} the nodes need to apply a non-linear Toffoli gate to achieve universality of computation. This, in turn, required
a workspace of $\Omega\big((n^3+n^2s^2)\log n\big)$ qubits per node.

\item \textbf{Sequential verification.} We use the verifiable quantum secret sharing (VQSS) protocol of \cite{Crepeau2002} to verify that the encoding was carried out correctly and that at the end of the computation there will be a state to reconstruct. The verification procedure requires ancillary states. However, following the idea developed in \cite{Lipinska2019}, we perform the verification in a sequential way. That is, to verify each input we use the ancillas one by one instead of all at once as in \cite{Crepeau2002}. 
In particular, the nodes use at most $2n$ single-qubit ancillas at a time to verify the input states (or ancillas in $\mathfrak{R}$) and at most $4n$ single-qubit ancillas to apply the $T$ gate.
\end{itemize} 

All in all, this amounts to an operational workspace of at most $n^2 + 4n$ single qubit shares for our protocol. Out of those, $n^2$ shares correspond to the input states on which the distributed computation is performed. 
For comparison, the protocol of \cite{Crepeau2002} requires simultaneous control over  $\Omega\big((n^3+n^2s^2)\log n\big)$ qubits per node, where $s$ is the security parameter of the protocol. Moreover, due to introducing the possibility of aborting the protocol, our MPQC scheme lowers the communication complexity. That is, our protocol reduces the number of qubits that each nodes has to send by a factor of $\mathcal{O}(n^3\log(n))$ compared to the protocol of \cite{Crepeau2002}.

Finally, when the number of cheaters $t$ is restricted by the distance $d$ of the CSS code, i.e. when $t\leq  \left\lfloor \frac{d-1}{2} \right\rfloor$, we prove that our protocol satisfies the usual security requirements (soundness, completeness and privacy, see above). Our statements follow from the fact that any error correcting code has the ability to correct at most $ \left\lfloor \frac{d-1}{2} \right\rfloor$ arbitrary errors and therefore, any errors introduced by the cheaters can be corrected by the honest nodes. What is more, the inputs and outputs of honest nodes will be also private, since if they recover the outputs exactly, then the cheaters get no information about inputs or outputs \cite{Gottesman2000}. Our statements hold with probability exponentially close to 1 in the security parameter~$s$.

\subsection{Example for 7 nodes} \label{sec:example}

Let us consider a network of $n=7$ nodes and assume that the nodes want to perform a CNOT between inputs $\rho_1$ and $\rho_2$ of nodes ``1'' and ``2'' of the network. For the execution of this protocol we will need a workspace of 28 qubits per node. For the sake of the example, we will also assume that the inputs are pure single-qubit states, $\rho_1 = \ketbra{\psi_1}$ and $\rho_2 = \ketbra{\psi_2}$, and that the protocol does not abort. The 7-qubit Steane's code \cite{Steane1996} is the smallest example of a qubit CSS code with transversal Cliffords. This code has distance $d=3$ meaning that it can correct $\left\lfloor \frac{d-1}{2} \right\rfloor = 1$ arbitrary error. This also means that in an MPQC protocol built on the 7-qubit code, we can tolerate $t=1$ cheater. 

\textit{Sharing and verification.} Node ``1'' encodes her single-qubit pure input $\ket{\psi_1}$ into 7 physical qubits using the Steane's code encoding map $\mathcal{E}$. She sends one qubit to each of the remaining 6 nodes, while keeping one qubit to herself. Each node again encodes the received qubit using the Steane's code and shares 6 qubits of that encoding with other nodes. At this point the input state $\ket{\psi_1}$ has been encoded twice, i.e.
\begin{align}
\dbar \Psi_{1} = \mathcal{E}^{\otimes 7} \circ \mathcal{E}(\ketbra{\psi_1}),\\
\dbar \Psi_{2} = \mathcal{E}^{\otimes 7} \circ \mathcal{E}(\ketbra{\psi_2}).
 \end{align} 
Each node holds $7$ qubits in total.

The nodes run the verification procedure according to \cite{Lipinska2019}, verifying that the encoding of each node $i$ was done correctly. The encoding of each input state can be verified one at a time. In one round of verification of a single input, each node uses at most 14 local ancilla qubits. The ancillas shares are encoded twice with the 7 qubit code and distributed in the same way as the input states. The nodes randomly perform the CNOT gate between $\dbar \Psi_{1}$ and an ancilla,
 to identify errors possibly introduced by cheating nodes. These ancillas are then measured and the outcome of the measurement allows the nodes to jointly conclude whether verification of the encoding was correct, i.e. whether the distributed input states have at most $t=1$ error on the same position. If so, then the errors are correctable by the 7 qubit code, and the nodes hold a valid logical state of the code. This procedure is repeated $s^2+2s$ times in total, where $s$ is the security parameter.
 
The same sharing and verification procedure is carried out for node ``2'' and her single-qubit pure input $\ket{\psi_2}$: it is first shared and then verified. As before, the verification requires at most 14 local ancilla qubits at a time. 
After the second verification each node holds 14 verified data qubits corresponding to the logical inputs $\dbar \Psi_{1} \otimes \dbar \Psi_{2}$. 
Note that the input states are never measured.

\textit{Computation.} Each node applies the CNOT gate locally to shares coming from node ``1'' and ``2''. The CNOT gate is a Clifford gate. Therefore, since the inputs are verified to be logical states of the 7 qubit code, applying the CNOT locally is well-defined and yields a logical operation between logical inputs $\dbar \Psi_{1} \otimes \dbar \Psi_{2}$. Let us define the output of the computation $\dbar \omega$,
\begin{equation}
\dbar \omega = \dbar{CNOT}\left(\dbar \Psi_1 \otimes \dbar \Psi_2\right).
\end{equation}

\textit{Reconstruction.} Nodes ``1'' and ``2'' get all of the shares corresponding to her own outputs, i.e.
\begin{equation}
\dbar \omega_1 = \tr_{2}(\dbar \omega), \quad \dbar \omega_2 = \tr_{1}(\dbar \omega).
\end{equation}
They separately run local error correcting circuit of the 7 qubit code on $\dbar \omega_1$ and $\dbar \omega_2$, respectively. They identify errors, see Reconstruction of Protocol \ref{prot:mpqc} for details. This is necessary, since the cheater might have introduced errors during or after the computation, and right before the reconstruction. Each of the nodes ``1'' and ``2'' corrects errors and reconstructs her output $\omega_1$ and $\omega_2$, respectively. The outputs are single qubit states, and are such that
\begin{align}
\omega_1 & = \tr_2(CNOT(\ketbra{\psi_1}\otimes \ketbra{\psi_2})), \\ \omega_2 & = \tr_1(CNOT(\ketbra{\psi_1}\otimes \ketbra{\psi_2})).
\end{align}

\section{Methods}\label{sec:methods}
In this section we discuss our MPQC protocol in detail. We lay down the framework by first discussing properties of CSS codes which will be useful for the distributed computation in Section \ref{subsec:CSS}. Then we introduce a few important subroutines, namely Verifiable Secret Sharing (Section \ref{subsec:VHSS}), Distributed Gate Teleportation (Section \ref{subsec:distributed_gate_teleport}) and Verification of Clifford-Stabilized States (Section \ref{subsec:verification_magic_state}). Finally, in Section \ref{subsec:MPQC} we discuss our Multi-party Quantum Computation protocol and state its security in Section \ref{subsec:Security}.

\subsection{CSS codes}\label{subsec:CSS}
In our considerations we will focus on a class of error correcting codes called Calderbank-Shor-Steane (CSS) codes \cite{Steane1996,Calderbank1996}. A CSS code $\mathcal{C}$  is defined through two binary classical linear codes, $V$ and $W$ , satisfying $V^* \subseteq W$, where $V^*$ is the dual code of $V$. Then, $\mathcal{C} :=V \cap \mathcal{F}W$ is a set of states of $n$ qubits which yield a codeword in $V$ when measured in the standard basis, and a codeword in $W$ when measured in the Fourier basis.  A code encoding one logical qubit into $n$ physical qubits is commonly denoted with double square brackets $[[n,1,d]]$. Here $d$ is the distance of the code, which relates to the maximum number of arbitrary errors $t$ which the code can correct as $t \leq \left\lfloor \frac{d-1}{2} \right\rfloor$. 

In distributed computation each node can only apply local operations. Therefore, we want that logical operations $\bar{\Lambda}$ are implemented by applying local operations $\Lambda$ on the individual qubits held by the nodes and encoded with $\mathcal{C}$, i.e. $\bar{\Lambda} = \Lambda^{\otimes n}$. This property is called transversality.
For our construction of the MPQC protocol we choose specific CSS codes $\ctr$ with transversal operations, which satisfy:
\begin{enumerate}
\item $\ctr$ uses the same classical code to correct $X$ and $Z$ errors, i.e. $V=W$.
\item  The weight of the stabilizer generators of $\ctr$ is a multiple of 4, and the logical Pauli operators $X$ and $Z$ have weight $1\mod 4$, or $3\mod 4$. 
\end{enumerate}
Property 1 guarantees that the Hadamard gate $H$ can be applied transversally, while property 2 guarantees that the phase gate 
$P = \begin{pmatrix}
1 & 0 \\ 0 & i 
\end{pmatrix}$ can be applied transversally. Additionally, note that the CNOT gate is transversal for any CSS code. Since $H,P$ and CNOT generate the Clifford set, one can apply any Clifford gate on the code $\ctr$ transversally \cite{Gottesman1998}. Finally, any CSS code has a property that measurements can be performed qubit-wise, but the measurement outcome of every qubit must be communicated classically to obtain the result of the logical measurement.

\begin{figure*}[!t]
\ffigbox{%
 \includegraphics[width=\textwidth]{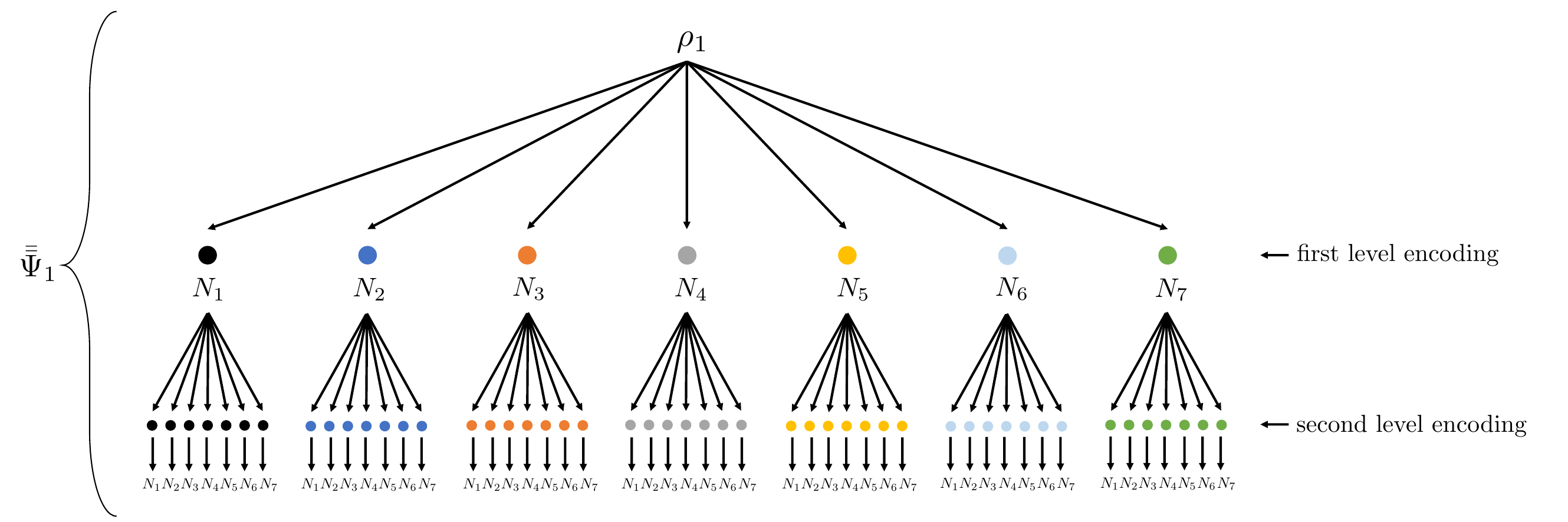}%
}{
\caption{Two-level encoding of the input qubit state $\rho_1$ of node ``1''. The double-encoded distributed state is denoted by $\dbar{\Psi}_1$. Each dot represents a single-qubit share.}\label{fig:twolevel_encoding}
}
\end{figure*}
\vspace{-1em}
\subsection{Subroutines}\label{sec:subroutines}

Here we list and describe the subroutines we will later use as building blocks in our MPQC protocol. We start with reviewing an existing construction of verifiable quantum secret sharing used for verifying inputs in MPQC. 
Next, we discuss two of our contributions -- distributed gate teleportation and verification of states stabilized by Clifford gates. These last two subroutines will be essential for implementing universal circuits in MPQC.

\subsubsection{Verifiable quantum secret sharing}\label{subsec:VHSS}

One of the first ingredients of our MPQC protocol is verifiable quantum secret sharing (VQSS) first introduced in \cite{Crepeau2002},  see Protocol \ref{prot:vhss}.  Here we use a modified version of the scheme, which we introduced in \cite{Lipinska2019} to reduce the qubit workspace required for each node. 
A VQSS scheme is a scheme
which shares a quantum state among $n$
nodes in a verifiable way using quantum shares. The scheme we use is based on a CSS code $\mathcal{C}$ with distance $d$, and tolerates at most $t\leq \left\lfloor \frac{d-1}{2} \right\rfloor$ non-adaptive active cheaters. We remark that the scheme works for any CSS code $\mathcal{C}$.

Let us describe the task in detail. In VQSS the dealer $D$ encodes her input state $\rho$ using the code $\mathcal{C}$. The encoding produces an $n$-qubit entangled state. $D$ shares this state among the nodes by sending one qubit to each node. Each node then encodes the received one-qubit share again with the same error correcting code into $n$ qubits, and sends one qubit to each of the $n$ nodes. This way each node holds $n$ single-qubit shares. We denote a double-encoded logical global state of the nodes with a double bar, $\dbar{\Psi}$. Throughout the rest of this paper we will use index $i = 1,\dots, n$ to denote the encoding performed by node $i$, and $\ell = 1,\dots, n$ to denote the share held by node $\ell$. The share held by node $\ell$ and coming from encoding performed by node $i$ will be denoted as $\dbar \Psi_{i_\ell}$. \looseness=-1

The nodes run a verification procedure to verify that $\dbar{\Psi}$ is a valid codeword of the code $\mathcal{C}$. The verification is a generalization of Steane's error correction method to the distributed setting \cite{Steane1997}. More specifically, the nodes publicly check that there are at most $t \leq \left\lfloor \frac{d-1}{2} \right\rfloor$ errors at the first level of encoding, i.e. the encoding done by the dealer. To do so, they use ancilla qubits encoded twice with the same code $\mathcal{C}$. These ancillas are measured during the verification. Since $\mathcal{C}$ is a CSS code, the measurement outcomes yield a codeword from a classical code $V$ (resp. $W$) when measured in the standard (resp. Fourier) basis. Using an error correcting procedure for the classical linear codes allows the nodes to identify shares of the first-level encoding which carry errors. The positions of these shares are collected in a public set $B$ of \emph{apparent} cheaters (indeed, there is no way to tell apart the errors introduced by the dealer and errors introduced by the cheaters on the first-level encoding). If there are at most $t$ first-level errors (i.e. $|B| \leq t$), the dealer passes the verification. Moreover, since the protocol assumes the existence of at most $t$ cheaters, there can be at most $t$ errors in each second-level encoding. Therefore, if the dealer passes the verification, at the end of the protocol there will always be a state to reconstruct, since errors at both first and second level encoding can be corrected by the code $\mathcal{C}$. Following the idea introduced in \cite{Lipinska2019}, this verification procedure can be performed by encoding and measuring one ancilla qubit at a time. There are $s^2+2s$ iterations of the verification procedure, where $s$ is the security parameter.  Additionally, similarly as in \cite{Lipinska2019}, we use CSS codes which encode a single qubit into $n$ single qubits. 
The sequential VQSS protocol requires a $3n$-qubit workspace per node to verify one single-qubit input state, see \cite{Lipinska2019} for details. Each node needs to send $\mathcal{O}(n^2s^2)$ qubits. 

\textbf{Verification of logical 0 (VQSS(0)).} 
In the following sections we will make use of a handy property of the VQSS protocol of \cite{Crepeau2002}. Namely, the protocol can verify that the state shared by the nodes is exactly the logical $\ket{\dbar{0}}$ of code $\mathcal{C}$, see \cite{Crepeau2002,Smith2001,Lipinska2019}. The verification phase is almost the same as in the VQSS  protocol of \cite{Crepeau2002}, except now the nodes check whether the classical measurement outcomes interpolate to 0 after decoding them twice with a classical decoder, see \cite{Crepeau2002,Smith2001} for details. 
We will refer to this verification procedure as VQSS(0).

\onecolumngrid
\begin{framed}
\begin{protocol}[Verifiable Quantum Secret Sharing (VQSS) \cite{Crepeau2002, Lipinska2019} - outline]\label{prot:vhss}
~\\
Input: Single-qubit state $\rho$ of dealer $D$ to share, CSS error correcting code $\mathcal{C}$.
\begin{enumerate}
\item \textbf{Sharing}\\
The dealer $D$ encodes her input $\rho$ into a logical state using code $\mathcal{C}$ and sends each qubit of the logical state to every other node, while keeping one for herself. Each node encodes the share received from $D$ again using $\mathcal{C}$ and shares among the nodes keeping one qubit for herself. Therefore, the nodes create a two-level encoding of $\rho$. At this point each node holds $n$ single-qubit shares coming from every other node.
\item \textbf{Verification}\\
Nodes verify whether $D$ is honest, i.e. that the shares held by the nodes are consistent with a codeword of $\mathcal{C}$ and at the end of the protocol a state will be reconstructed. The nodes construct a public set $B$ which records positions of nodes with inconsistent shares on the first level of encoding.\\  Each node uses at most additional $2n$ ancilla qubits for one iteration of the verification procedure. There are $s^2+2s$ iterations of verification, where $s$ is the security parameter. If $|B|\leq t$ the dealer passes the verification phase.
\end{enumerate}
\end{protocol}
\end{framed}
\twocolumngrid

\subsubsection{Distributed gate teleportation}\label{subsec:distributed_gate_teleport}

To perform universal computation, we need a universal set of gates. However, Clifford gates by themselves are not a universal set. An example of a set that \emph{is} universal, is the set generated by the Clifford gates extended with the $T = \sqrt{P}$ gate \cite{Nebe2001}, denoted Cliff$+T$\footnote{One can efficiently approximate any gate $G$  within distance $\epsilon$ using $\tn{polylog}(1/\epsilon)$ gates from set Cliff$+T$ \cite{Kitaev1997}.}. On the other hand, for any error correcting code, it is impossible to perform universal quantum computation using only transversal gates \cite{Eastin2009}. In particular, for the class of CSS codes under consideration, $\ctr$, the Clifford gates can be applied transversally (see Sec. \ref{subsec:CSS}), but the $T$ gate cannot.

To remedy this problem in the domain of quantum (non-distributed) computing, one can use a technique called gate teleportation \cite{Gottesman1999}. In particular, for the $T$ gate, the idea is to use a specially created ancilla state, measure, and apply a correction depending on the measurement outcome, see Figure \ref{fig:gate_teleport}. Importantly, this correction is done with $XP^\dagger$ and since $XP^\dagger$ is a Clifford gate, it can be applied transversally. The cost of this procedure is to create the special ancilla state, which is commonly referred to as a magic state. In the case of the $T$ gate it is $\ket{m} = \tfrac{1}{\sqrt{2}} (\ket{0} + e^{i\pi/4}\ket{1})$.

We generalize this procedure to a distributed setting, see Protocol \ref{prot:gate_teleport}. Our protocol takes two states as an input: logical $\dbar{\Psi}$ and logical $\ket{\dbar{m}}$, both encoded twice (two-level encoding) with code $\ctr$. We assume at this point that both states are verified with respect to the same dealer $D$. The verification of $\dbar{\Psi}$ can be performed with VQSS. However, verifying that $\ket{\dbar{m}}$ is \emph{exactly} the magic state is non-trivial and we introduce it in the next section. 

To apply a logical $T$ gate to $\dbar{\Psi}$ the nodes first perform a logical transversal CNOT operation on their shares, taking shares of $\ket{\dbar{m}}$ as a control and shares of $\dbar{\Psi}$ as a target. Then each node $i=1,\dots,n$ measures the target qubit in the standard basis and announces the measurement outcome. Nodes publicly check whether the measurement collapsed the target state onto a classical string corresponding to a logical $\ket{\dbar{0}}$ or a logical $\ket{\dbar{1}}$. To do so, they check whether the resulting string of measurement outcomes $\mathbf{v}_i$ interpolates to 0 or to 1 using the classical decoder twice. At the same time the nodes update the set of errors $B$. If the interpolated value is 0 then no correction is necessary. If the interpolated value is 1 then the nodes apply the correction $XP^\dagger$ transversally. 

\begin{figure}[b]
\ffigbox{%
 \includegraphics[scale=0.3]{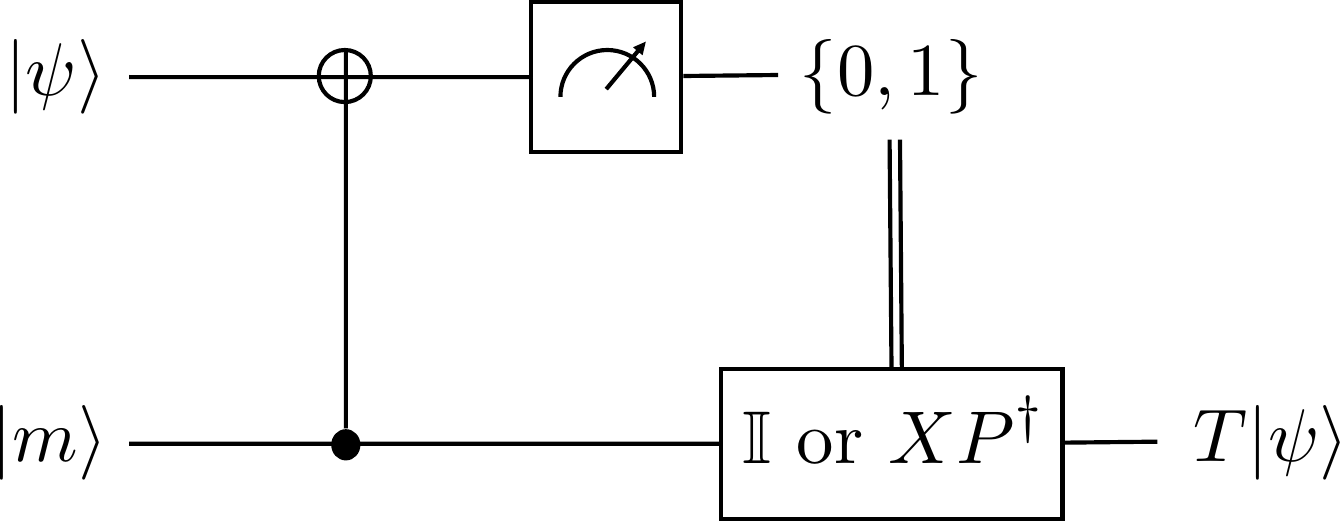}%
}{
\caption{Gate teleportation of the $T$ gate. The circuit applies the $T$ gate to an arbitrary single-qubit state $\rho$. Each state may be logical and each operation may be applied transversally. }\label{fig:gate_teleport}
}
\end{figure}

\onecolumngrid
\vspace{10em}
\begin{framed}
\begin{protocol}[Distributed Gate Teleportation (GTele)] \label{prot:gate_teleport}
~\\
\noindent Input: $\dbar \Psi$, $\ket{\bar{\bar{m}}}$ distributed by $D$ to the nodes and verified by the nodes using VQSS (Protocol \ref{prot:vhss}), set of apparent cheaters $B$ from verification of $\dbar \Psi$ and $\ket{\bar{\bar{m}}}$.\\
\noindent Output: Logical $T$ gate applied to the input logical state, $\dbar{T} (\dbar \Psi)$.

\begin{enumerate}
\item Each node $\ell$, for a share coming from node $i$:
\begin{enumerate}[label=(\alph*),topsep=0pt,itemsep=-1ex,partopsep=1ex,parsep=1ex]
\item applies CNOT with $\ket{\bar{\bar{m}}}_{i_\ell}$ as control qubit and  $\dbar \Psi_{i_\ell}$ as target qubit,
\item measures the target qubit in the $Z$ basis and broadcasts the result using the secure broadcast channel, see Network model.
\end{enumerate}
\item Broadcasted values yield words $\mathbf{v}_i$. Nodes publicly check on which positions the errors occurred using the classical decoder and update set $B$ with the positions of errors. They decode the classical value $a$: 
\begin{itemize}
\item If $a=0$, the nodes do not apply any correction.
\item If $a=1$, the nodes apply $XP^\dagger$ to their shares.
\end{itemize}
\end{enumerate}
\end{protocol}
\end{framed}
\vspace{2em}
\normalsize
\twocolumngrid

\subsubsection{Verification of Clifford-stabilized states}\label{subsec:verification_magic_state}
One last ingredient we need to perform distributed computation is to verify that the logical magic state $\ket{\dbar m}$ is indeed the logical magic state. This is necessary since we want to be sure that when we apply the $T$ gate in a distributed way, the result will be the $T$ gate on the shares of honest nodes. 

Here we present a protocol to verify the magic state in a distributed way. In fact, our protocol works for any qubit state $\ket{g}$ stabilized by a single-qubit Clifford gate $G$. Our idea is inspired by so called stabilizer measurement in quantum error correction, see Figure \ref{fig:verification_stab}. Consider a single-qubit gate $XP^\dagger$ with a $+1$ eigenstate $\ket{m}$.  Then it holds that the state $\ket{+}\ket{m}$ is stabilized by controlled $XP^\dagger$ gate, $C$-$XP^\dagger$, where $\ket{+}$ is used as a control and $\ket{m}$ is used as a target. That is,
\begin{align}
C\text{-}XP^\dagger (\ket{+}\ket{m}) = \ket{+}\ket{m}.
\end{align}
This gives us an insight into how the verification of $\ket{m}$ should work: if the target state was the magic state then after performing $C$-$XP^\dagger$ we will always measure the control in $\ket{+}$ (or equivalently, first apply $H$ and measure 0). Additionally, if the target was not in the magic state and we measure the control in $\ket{+}$, we will project the target onto $\ket{m}$. For this to work, one  needs to make sure that the control qubit was in $\ket{+}$ before applying the controlled gate.

We adapt this procedure to run on the logical level in a distributed way as follows. Using VQSS(0), the nodes first verify a logical $\ket{\dbar 0}$ encoded and shared by $D$. They also share $\ket{\dbar m}$ and verify that it is a valid codeword of $\ctr$ using the VQSS, Protocol \ref{prot:vhss}. This step is necessary since we want the transversal operations which the nodes will perform in next steps to be well defined. Each of the nodes now applies the Hadamard gate to her share of $\ket{\dbar 0}$ to turn it into a logical $\ket{\dbar +}$, and after that performs $C$-$XP^\dagger$ between her shares of $\ket{\dbar +}$ and $\ket{\dbar m}$. Then the nodes apply the Hadamard gate to the control qubits one more time and measure in the standard basis. They announce their measurement results and use the classical decoder to get the value $a$, just like in VQSS(0) and GTele. Note that the protocol works as long as the gate $C$-$XP^\dagger$ can be applied transversally with respect to the code used to encode $\ket{\dbar 0}$ and $\ket{\dbar m}$. 

Protocol \ref{prot:verification_stabilizer} requires an operational workspace of $4n$ qubits per node: first the verification of $\ket{\dbar m}$ requires a $3n$-qubit workspace per node. After this verification step, each node needs to store $n$ qubits of $\ket{\dbar m}$ and uses an extra $3n$-qubit workspace to verify $\ket{\dbar 0}$. This amounts to a $4n$-qubit workspace per node. The communication complexity is the same as in the sequential VQSS protocol, that is $\mathcal{O}(n^2s^2)$ qubits per node, where $s$ is the security parameter.

\begin{figure}[b]
\ffigbox{%
 \includegraphics[scale=0.3]{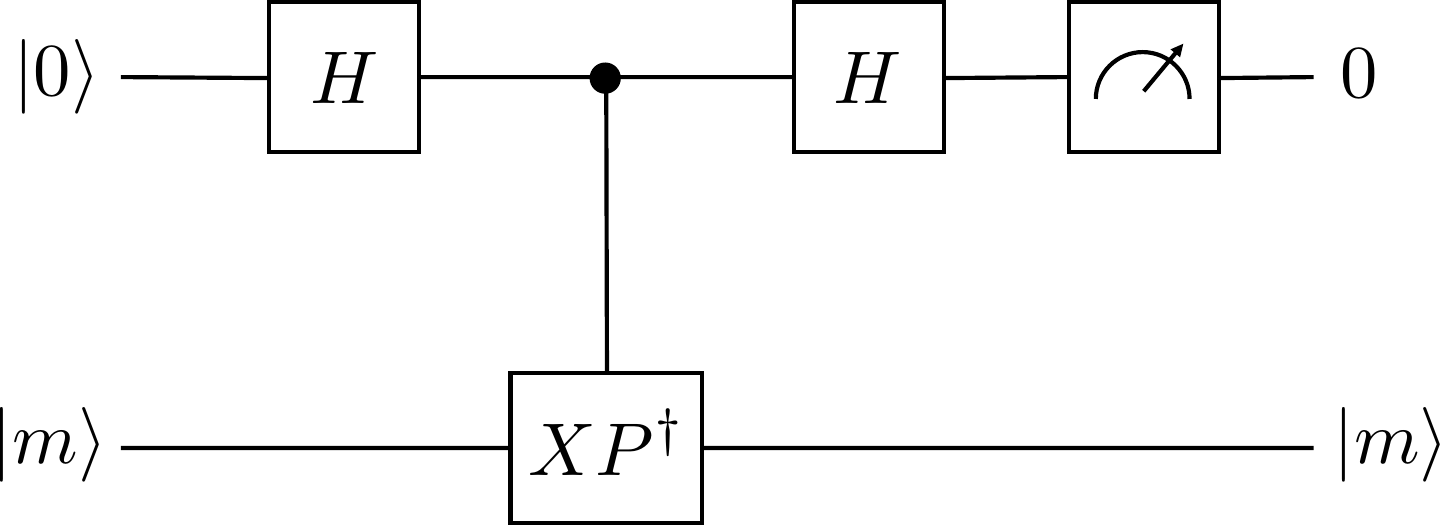}%
}{
\caption{Verification of the magic state using stabilizer measurement. The circuit verifies that the target input is the magic state using the fact that the state $\ket{+}\ket{m}$ is stabilized by the controlled $C$-$XP^\dagger$ gate. }\label{fig:verification_stab}
}
\end{figure}

\onecolumngrid
\vspace{2em}
\begin{framed}
\begin{protocol}[Verification of Clifford-Stabilized States (VMagic)] \label{prot:verification_stabilizer}
~\\
\noindent Input: $\ket{0}$ and $\ket{g}$ prepared by $D$, single-qubit Clifford gate $G$ stabilizing $\ket{g}$, error correcting code $\ctr$, set of apparent cheaters $B$. 
\noindent Output: verified logical states $\ket{\dbar{0}}$ and $\ket{\dbar{g}}$ 

\begin{enumerate}
\item The nodes run VQSS(0) with $\ket{0}$ as an input and VQSS with $\ket{g}$ as an input with dealer $D$. They update the set $B$ with apparent cheaters $B_{0}$ revealed in verifying $\ket{0}$ and apparent cheaters $B_g$ revealed in verifying $\ket{g}$. 
\item Each node $\ell$, for all shares coming from node $i$:
\begin{enumerate}[label=(\alph*),topsep=0pt,itemsep=-1ex,partopsep=1ex,parsep=1ex]
\item applies $H$ to $\ket{\bar{\bar{0}}}_{i_\ell}$,
\item applies $C$-$G$ with $\ket{\bar{\bar{0}}}_{i_\ell}$ as the control qubit and $\ket{\dbar{g}}_{i_\ell}$ as the target qubit,
\item applies $H$ to control qubit,
\item measures the control qubit in the $Z$ basis and broadcasts the result using the secure broadcast channel, see Network model.
\end{enumerate}
\item Broadcasted values yield words $\mathbf{v}_i$. Nodes publicly check on which positions the errors occurred using the classical decoder and update set $B$ with the positions of errors. They decode the classical value $a$: 
\begin{itemize}
\item If $a=0$, continue.
\item If $a=1$, set $B = [n]$ (this will cause the MPQC protocol to abort after the computation phase).
\end{itemize}
\end{enumerate}
\end{protocol}
\end{framed}
\vspace{2em}
\normalsize
\twocolumngrid

\subsection{Multi-party quantum computation }\label{subsec:MPQC}

We are now ready to perform a distributed computation using the ingredients from the previous sections. Recall, the goal of the protocol is to perform a circuit $\mathfrak{R}$ in a distributed way on $n$ single-qubit private inputs $\rho_1,\dots,\rho_n$, each coming from one node $1,\dots,n$. Note that the inputs can possibly be entangled. In universal MPQC we compute an arbitrary circuit $\mathfrak{R}$. We choose Clifford gates supplemented with a $T$ gate to be our universal set of gates. 

\paragraph*{Sharing and verification.} During this phase the nodes jointly verify whether dealer $D_i$ is honest, i.e. whether there are less than $t \leq \left\lfloor \frac{d-1}{2} \right\rfloor$ errors in the first-level encoding performed by $D_i$. They publicly record the positions on which the errors occurred in the set of apparent cheaters $B_{i}$ corresponding to dealer $D_i$. After all of the dealers are verified, they publicly construct a global set of apparent cheaters  $B$, see step 2 of Protocol \ref{prot:mpqc}. If $|B| \leq t$ the protocol continues. Note that $|B| \leq t$ implies that each of the honest nodes holds shares with at most $t$ errors on the same positions of the first level of encoding. Otherwise, when $|B|>t$, the honest nodes know they will abort the protocol after the computation and replace their shares with $\ket{0}$. This step is necessary to complete the security proof.

In this phase each node requires a workspace of $n^2 + 2n$ qubits to verify all of the inputs in a sequential way, and sends $(n+1)ns^2$ qubits, where $s$ is the security parameter. The size of the workspace for our MPQC protocol does not depend on $s$ since the verification phase of VQSS is performed in a sequential way.

\paragraph*{Computation.} In the computation phase, the goal is to compute the circuit $\mathfrak{R}$ on the twice-encoded (see Figure \ref{fig:twolevel_encoding}) and verified inputs.
Note that the set of $B$ of apparent cheaters created during the verification is public and cumulative throughout the protocol. That means that it accumulates errors from executions of VMagic, VQSS(0), GTele in the computation phase. If at any point $|B|>t$ during these protocols, the honest nodes proceed in the same way as in the verification phase -- they replace their shares with $\ket{0}$.
At the end of the computation phase the nodes look at the set $B$. If $|B| > t$ the protocol aborts. Otherwise, the nodes proceed to the reconstruction phase.

The inputs require a workspace of $n^2$ qubits per node. For applying the $T$ gate, each node needs a workspace of additional $4n$ qubits, see Protocol \ref{prot:verification_stabilizer}. Additionally, the verification of every ancilla in $\mathfrak{R}$ requires a workspace of $3n$ qubits per node. This means that each node requires a workspace of at most $n^2+4n$ qubits in total. In this phase, each node sends $\mathcal{O}\big(( \#\tn{ancillas} + \# T)ns^2 \big)$ qubits.

At this point the nodes hold a global state $\dbar \omega$. Let $\dbar \omega_k = \tr_{[n]\smallsetminus i} (\dbar \omega)$ be the outcome of each node~$i$.

\paragraph*{Reconstruction.} After the computation phase the cheating nodes can still introduce errors to the shares they hold before sending them back to corresponding dealers. Therefore, each of the dealers, after receiving her original shares back, runs an error correcting circuit for the code $\ctr$ and identifies further errors. If there is no more than $t$ errors, she reconstructs her output state~$\omega_i$.
In this phase, the nodes just exchange the existing qubits, therefore the operational workspace does not increase from $n^2 + 4n$. Each node sends $n^2$ qubits.

\onecolumngrid
\vspace{2em}
\begin{framed}
\begin{protocol}[Multi-party quantum computation (MPQC)] \label{prot:mpqc}
~\\
\noindent Input: private input $\rho_i$ for every node $i$, circuit $\mathfrak{R}$, error correcting code $\ctr$.
\\

\noindent\textbf{Sharing and Verification}
\begin{enumerate}
\item \label{step:sharing} Each node $i=1,\dots,n$ runs sequential verifiable quantum secret sharing (VQSS, Protocol \ref{prot:vhss}) with single-qubit input $\rho_i$ and code $\ctr$, acting as dealer $D_i$. This way nodes create logical $\dbar \Psi_i$ encoded twice with $\ctr$, see Figure \ref{fig:twolevel_encoding}.
\item The nodes publicly create sets $B_{i,\ell}$ containing all second-level errors from all $n$ executions of sequential VQSS (see \cite{Crepeau2002,Lipinska2019} for details). If for each node $\ell$, if $|B_{i,\ell}| > t$ then they add node $\ell$ to a set of apparent cheaters $B_i$ for dealer $D_i$. After all $n$ executions of VQSS, they create a global set of apparent cheaters $B = \bigcup_i B_i$. If $|B| > t$ the nodes know they will abort after the computation. They replace all the shares they hold with $\ket{0}$.
\end{enumerate}
\textbf{Computation}
\begin{enumerate}[resume]
\item For every Clifford gate $C$ of the circuit $\mathfrak{R}$ the nodes apply $C$ transversally to their local qubits. For every $T$ gate in $\mathfrak{R}$ applied to the input of $D_i$:
\begin{enumerate}[label=(\alph*),topsep=0pt,itemsep=-1ex,partopsep=1ex,parsep=1ex]
\item $D_i$ creates $\ket{0}$ and $\ket{m}$. The nodes run Verification of Clifford-Stabilized States (VMagic, Protocol \ref{prot:verification_stabilizer}). The nodes update the set $B$ with apparent cheaters from execution of VMagic. If $|B| > t$ the nodes replace all the shares they hold with $\ket{0}$.
\item The nodes apply Distributed Gate Teleportation (GTele, Protocol \ref{prot:gate_teleport}) to their shares of $\dbar \Psi_i$ and verified $\ket{\dbar m}$. The nodes update the set $B$ with apparent cheaters from execution of GTele. If $|B| > t$ the nodes replace all the shares they hold with $\ket{0}$ and do not apply a correction in GTele (treating the measurement outcome as 0).
\end{enumerate}
\item For every $\ket{0}$ ancilla necessary to perform the circuit $\mathfrak{R}$, a node $i \notin B$ chosen at random using the public source of randomness, runs VQSS(0) acting as a dealer. They update $B$ with the set of apparent cheaters from the execution of VQSS(0). The nodes use the verified $\ket{\dbar 0}$ to perform $\mathfrak{R}$. If $|B| > t$ the nodes replace all the shares they hold with $\ket{0}$.
\item If $|B|>t$ the protocol aborts. Otherwise continue.
\end{enumerate}
Let the logical global outcome of the computation be $\dbar{\omega}$, with $\dbar \omega_i = \tr_{[n]\smallsetminus i} (\dbar \omega)$ corresponding to the outcome of each node~$i$.\\

\noindent\textbf{Reconstruction}
\begin{enumerate}[resume]
\item Each node sends all of the shares of $\dbar \omega_i$ to $D_i$. 
\item Each $D_i$:
\begin{enumerate}[label=(\alph*),topsep=0pt,itemsep=-1ex,partopsep=1ex,parsep=1ex]
\item  For each share coming from node $j \notin B$, $D_i$ runs an error correcting circuit for the code $\ctr$. She creates a set of errors $\tilde B_{i,j}$ such that it contains $B_{i,j}$, i.e. $B_{i,j} \subseteq \tilde B_{i,j}$. If $|\tilde B_{i,i}| \leq t$ then errors are correctable, $D_i$ corrects them and decodes the $i$-th share obtaining $\bar \omega_i$. Otherwise, $D_i$ adds $j$ to the global set $B$.
\item For all $j \notin B$, $D_i$ randomly chooses $n-2t$ shares of  $\bar \omega_i$ and applies an erasure-recovery circuit to them. She obtains $\omega_i$.
\end{enumerate}
\end{enumerate}
\end{protocol}
\end{framed}
\vspace{2em}
\normalsize
\twocolumngrid

Altogether, each node requires an operational workspace of $n^2+4n$ qubits, and sends $\mathcal{O}\big((n + \#\tn{ancillas} + \# T)ns^2 \big)$ qubits throughout the execution of the MPQC protocol, Protocol \ref{prot:mpqc}.

\subsection{Security statements} \label{subsec:Security}

In this section we prove the security of our MPQC protocol. To do so, we first state the security framework and definition following the work of \cite{Beaver1992,Micali1992,Canetti2001,Unruh2010}. We employ the simulator-based security definition, see Definition \ref{def:security} below. It implies that the three properties -- correctness, soundness and privacy defined at the beginning of this manuscript -- are automatically satisfied.
Our security definition uses two models of the protocol -- ``real'' and ``ideal''.  The real model corresponds to the execution of the actual MPQC protocol. In the ideal model the nodes interact with an oracle that perfectly realizes the MPQC task and is incorruptible. The general idea is that the protocol is secure if one cannot distinguish a real execution of MPQC from the ideal one.

In the ideal model the honest nodes can only interact with the oracle. What is more, they do so in a so called ``dummy'' way, i.e. they simply forward their input to the oracle, and output whatever they receive from the oracle. The cheating nodes can collude and perform any joint operation on their inputs before sending it to the oracle. Similarly, they can perform any joint operation on whatever they receive from the oracle before they output their state. Recall that we do not make any assumption on the computational power of the cheaters. For the purpose of the proof we will say that the cheaters can be corrupted by an adversary $\mathcal{A}$ which can corrupt at most $t$ nodes, but otherwise is arbitrarily powerful.
Moreover, by $\mathcal{A}_{\rm real}$ we will denote the adversary in the ``real'' protocol and by $\mathcal{A}_{\rm ideal}$ the adversary in the ``ideal'' protocol. 

\begin{definition}[$\epsilon$-security]\label{def:security}
We say that a MPQC protocol $\Pi$ is $\epsilon$-secure if for any input state $\rho$, and any real adversary $\mathcal{A}_{\rm real}$, there exists an ideal adversary $\mathcal{A}_{\rm ideal}$, such that the output state $\omega_{\rm real}:=\Pi_{\rm real}(\rho)$ of the real protocol is $\epsilon$-close to the output state $\omega_{\rm ideal}:=\Pi_{\rm ideal}(\rho)$ of the ideal protocol, that is
\begin{align}
\tfrac{1}{2} \parallel \omega_{\rm real} - \omega_{\rm ideal}   \parallel_1 \leq \epsilon.
\end{align}
\end{definition}
To prove the security of the MPQC protocol, Protocol \ref{prot:mpqc} we first restate the soundness of the VQSS protocol \cite{Crepeau2002,Smith2001,Lipinska2019}.

\begin{lemma}[Soundness of VQSS]\label{lem:vqss_sound}
In the verifiable quantum secret sharing protocol, Protocol \ref{prot:vhss}, either the honest parties hold a consistently encoded secret or the dealer is caught with probability at least $1-2^{-\Omega(s)}$.
\end{lemma}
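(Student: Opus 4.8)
The plan is to bound the probability of the ``bad event'' in which the dealer passes verification ($|B|\leq t$) while the honest nodes' shares do \emph{not} form a consistent two-level encoding of any single logical state, and to show this probability is at most $2^{-\Omega(s)}$. Equivalently, I would show that conditioned on passing, with overwhelming probability the first-level shares on honest positions lie within distance $t$ of a unique codeword of $\mathcal{C}$, so that the error-correcting step at reconstruction recovers a well-defined secret. The whole analysis should be reduced to a classical statement about the linear codes $V=W$ underlying the CSS code $\ctr$.

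First I would discretize the dealer's and cheaters' deviation into Pauli errors acting on the first-level shares. Because $\mathcal{C}$ is a CSS stabilizer code and the verification circuit is Clifford -- transversal CNOTs coupling verified ancillas to the data, followed by measurement of only the ancillas -- it suffices to analyze the protocol's action on a Pauli basis of error operators. A general (possibly entangled) attack decomposes as a superposition of Paulis; the ancilla measurements partition these into syndrome classes, and standard stabilizer arguments let one reason about each Pauli component separately and recombine by linearity. This step converts the quantum soundness question into a classical one. Second, I would analyze a single verification round as a classical test via Steane's error-correction mechanism: coupling the data to a random verified ancilla codeword by transversal CNOT and measuring the ancilla reveals the first-level error syndrome of the data without disturbing the encoded logical information. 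I would then show that if the first-level word has error weight exceeding $t$ (hence is not correctable to a unique codeword), one randomized round enlarges $B$ with probability at least $p$, where $p$ is bounded away from $0$ (a value $p=\Omega(1/s)$ already suffices, given $s^2+2s$ rounds), using the distance of $V=W$ and the randomness of the test. Both the standard-basis and Fourier-basis tests are required, to catch $Z$-type and $X$-type inconsistencies respectively; since $V=W$, the two CSS sectors are handled symmetrically.

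Third comes amplification: the $s^2+2s$ rounds draw fresh independent randomness, so the probability that an inconsistent encoding evades detection in every round is at most $(1-p)^{s^2+2s}=2^{-\Omega(s)}$, yielding the claimed bound.

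I expect the main obstacle to be the quantum reduction of the first step, namely passing rigorously from ``the classical test catches an inconsistent word with probability $\geq p$'' to a statement about the post-measurement quantum state of the honest nodes, in the presence of cheaters who hold $t$ shares, may possess arbitrary side information, and whose errors can be correlated across positions. The delicate points are (i) conditioning correctly on the event $\{|B|\leq t\}$ rather than merely bounding a marginal passing probability, and (ii) ensuring that the at-most-$t$ positions recorded in $B$ genuinely localize all uncorrectable first-level errors, so that the honest shares interpolate to a single codeword. I would address these by invoking the simultaneity and soundness analysis of \cite{Crepeau2002,Smith2001} together with its single-qubit sequential adaptation in \cite{Lipinska2019}, which already establish the Pauli-discretization and conditioning arguments; the present lemma is then a restatement specialized to the CSS codes $\ctr$ under consideration.
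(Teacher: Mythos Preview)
The paper does not actually prove this lemma: it is stated as a restatement of the soundness result already established in \cite{Crepeau2002,Smith2001,Lipinska2019}, and is invoked as a black box in the proof of Theorem~\ref{thm:security}. Your proposal is consistent with this---your final paragraph explicitly defers to the same references---but you go further and sketch the underlying argument (Pauli discretization via the Clifford structure of the verification circuit, reduction to a classical syndrome test \`a la Steane, and amplification over the $s^2+2s$ rounds). That sketch is faithful to the approach in the cited works, so there is no disagreement in substance; you simply unpack what the paper leaves to citation.

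One small inaccuracy: you repeatedly invoke $V=W$ and the code $\ctr$, but Lemma~\ref{lem:vqss_sound} concerns Protocol~\ref{prot:vhss}, which the paper emphasizes works for \emph{any} CSS code $\mathcal{C}$, not only the self-dual--type codes $\ctr$ used later for transversal Cliffords. The soundness argument in \cite{Crepeau2002,Smith2001} does not rely on $V=W$; the standard-basis and Fourier-basis tests use $V$ and $W$ separately. This does not break your outline, but the specialization to $\ctr$ is unnecessary here.
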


\begin{restatable}[]{thm}{securityMPQC}
\label{thm:security}
 	The Multi-party Quantum Computation protocol, Protocol \ref{prot:mpqc}, is $\kappa 2^{-\Omega(s)}$-secure, where $\kappa = n + \# T$ gates $+ \#$ ancillas in $\mathfrak{R}$.
\end{restatable}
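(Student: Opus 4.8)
The plan is to establish $\epsilon$-security in the simulator-based sense of Definition~\ref{def:security} by constructing, for any real adversary $\mathcal{A}_{\rm real}$ corrupting a fixed set of at most $t$ nodes, an ideal adversary (simulator) $\mathcal{A}_{\rm ideal}$ whose joint output with the honest parties is $\kappa 2^{-\Omega(s)}$-close to the real output. The simulator runs an internal copy of Protocol~\ref{prot:mpqc}, playing all honest nodes but feeding them \emph{dummy} inputs (say $\ket{0}$), and relays all messages to and from $\mathcal{A}_{\rm real}$. During Sharing and Verification it uses the honest shares it controls to decode, for each corrupted dealer $D_i$ that passes verification, the effective single-qubit state the cheaters have committed to; it forwards these extracted inputs to the ideal oracle and later receives the cheaters' part of the ideal output. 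At Reconstruction it re-encodes those oracle outputs into the shares handed back to the corrupted dealers, so that the transcript seen by $\mathcal{A}_{\rm real}$ is consistent with the ideal functionality. The two things to prove are then (i) that on a suitable ``good event'' the real and simulated states coincide exactly, and (ii) that the good event fails with probability at most $\kappa 2^{-\Omega(s)}$.

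For correctness and soundness I would define the good event as the conjunction, over every invocation of VQSS, VQSS(0) and VMagic, of the soundness conclusion of Lemma~\ref{lem:vqss_sound}: either the dealer is caught (recorded in $B$) or the honest parties' shares are consistent with a valid codeword of $\ctr$ carrying at most $t$ first-level errors. Conditioned on the good event, the cumulative public set $B$ faithfully records all corrupted positions. Since the Cliffords act transversally on $\ctr$, and since Protocols~\ref{prot:gate_teleport} and \ref{prot:verification_stabilizer} correctly realize logical gate teleportation and magic-state verification, the computation phase implements the logical circuit $\bar{\mathfrak{R}}$ up to errors confined to positions in $B$. Whenever the protocol does not abort we have $|B|\leq t\leq\lfloor\frac{d-1}{2}\rfloor$, so the error-correcting circuit run at Reconstruction recovers the exact logical outputs; hence the real output equals the ideal output on the good event. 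Privacy is inherited from the distance-$d$ code: any $t$ shares carry no information about the encoded logical qubit (the complementary $n-t$ shares suffice to correct $t$ erasures, cf.\ \cite{Gottesman2000}), and the broadcast transcripts in VMagic and GTele only reveal syndrome data interpolating to fixed logical values, so substituting dummy honest inputs is undetectable and the simulated view is identical to the real one.

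The probability bound then follows from a union bound over the verification subroutines: there are $n$ input sharings (one VQSS per node), one VMagic per $T$ gate (each internally a VQSS plus a VQSS(0), contributing a constant factor absorbed into $2^{-\Omega(s)}$), and one VQSS(0) per ancilla of $\mathfrak{R}$. Each fails to be sound with probability at most $2^{-\Omega(s)}$ by Lemma~\ref{lem:vqss_sound}, so the good event fails with probability at most $(n+\# T+\#\tn{ancillas})\,2^{-\Omega(s)}=\kappa 2^{-\Omega(s)}$, which bounds the trace distance between $\omega_{\rm real}$ and $\omega_{\rm ideal}$ as required.

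The main obstacle I expect is the indistinguishability argument across the sequential composition in the presence of quantum side information, rather than any single subroutine. Concretely, I must argue that extracting the cheaters' committed input is well defined and that replacing honest data by dummies is undetectable through the \emph{two-level} encoding; and I must control error \emph{propagation} in the computation phase, where transversal two-qubit gates can spread errors between logical blocks, so that the cumulative $B$ still upper-bounds the corrupted positions in every block and the abort test $|B|>t$ fires simultaneously in the real execution and in the simulation. Tracking $B$ through the $B=[n]$ ``poisoning'' branch of VMagic and through the reconstruction-time additions of step~6 of Protocol~\ref{prot:mpqc} is where the bookkeeping is most delicate.
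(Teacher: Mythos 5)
Your proposal follows essentially the same route as the paper's proof: a simulator that runs the protocol internally with dummy $\ket{0}$ inputs for the honest parties, extracts the cheaters' effective inputs by decoding (the paper formalizes this with swap gates between the dummy and real cheater registers), re-encodes the oracle's output for the reconstruction phase, establishes exact equality of real and ideal outputs on the event that every VQSS/VQSS(0)/VMagic invocation is sound (the paper's Lemma~\ref{lem:conserv_enc} and Property~\ref{prop:prop1} supply the commutation of decoding with the logical circuit that you invoke via transversality), and finishes with the same union bound over the $\kappa = n + \#T + \#\tn{ancillas}$ verifications using Lemma~\ref{lem:vqss_sound}. The difficulties you flag at the end (error propagation between blocks, the $B=[n]$ branch, cumulative tracking of $B$) are precisely the points the paper handles through the abort-at-end-of-computation rule and the explicit case split on $|B|\leq t$ versus $|B|>t$, so no genuinely different idea is involved.
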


\begin{proof}[Idea of the proof]
Our proof is inspired by the approach taken in \cite{Crepeau2002,Smith2001}, on which we expand and explicitly show that the outputs of the real and ideal protocol are $\epsilon$-close, see Appendix \ref{app:proofs}. We construct an ideal protocol using a common simulation technique, where $\mathcal{A}_{\rm ideal}$ locally simulates the MPQC protocol, Protocol \ref{prot:mpqc}, with honest nodes interacting with the cheaters. This means that for any real adversary $\mathcal{A_{\rm real } }$ we construct an ideal adversary $\mathcal{A}_{\rm ideal}$ by saying that it internally simulates the execution of real protocol with the real adversary $\mathcal{A_{\rm real } }$. 
Then we formally write the execution of the real protocol. We show that the outputs of both protocols are equal in the case when the encoding in the sharing phase of Protocol \ref{prot:mpqc} is done correctly. We also prove that the $\epsilon$ error in the security comes from the fact that the verification of inputs and any ancillas needed for MPQC can fail with probability defined by Lemma \ref{lem:vqss_sound}.
\end{proof}

We remark that our security definition follows the paradigm of sequential composability, formalized by the real-vs-ideal security definition, Definition \ref{def:security}. The extendibility of our security definition to the more general framework of universal composability \cite{Canetti2001,Unruh2010} is left as an open problem.

\section{Discussion} \label{sec:outlook}
In our protocol we allow an abort when there are too many errors introduced by the cheaters, see Protocol \ref{prot:mpqc}. However, this condition can be removed following the approach outlined in \cite{Crepeau2002,Smith2001} (there called Top-Level Sharing), at the cost of more rounds of quantum communication. Given our objective is to save resources, we did not pursue this path in this manuscript. However, we can introduce a step before computation, in which the nodes perform a distributed encoding (creating the third level of encoding) of the verified inputs. It works as follows. The nodes run the VQSS verification procedure for every input state $\rho_i$, but do not create a global set of cheaters. Instead, they create a set $B_i$ recording first-level errors on input state $\rho_i$.
To perform the distributed encoding of input $\rho_i$ the nodes use ancilla states prepared and encoded by the corresponding dealer $D_i$. The nodes also verify the ancillas using VQSS and add the errors that occurred on the first level of encoding of ancillas to $B_i$. If $|B_i| \leq t$, the nodes perform the distributed encoding with the verified ancillas.
The encoding can be done transversally, since for any stabilizer error correcting code the encoding procedure is a Clifford operation \cite{Gottesman2009}. 

If a dealer is caught cheating, $|B_i| > t$, the protocol does not abort. Instead, a node which has not been caught cheating yet prepares an encoding of $\ket{0}$ and the nodes proceed to verify it in the same way as before. Note that there will be at most $t$ failed tries in preparing a valid encoding of $\ket{0}$ since there are at most $t$ cheaters. Otherwise, upon a successful verification of the encoded $\ket{0}$, the nodes proceed to the distributed encoding. This step replaces the invalid input from the cheater with a valid encoding of $\ket{0}$.
Same procedure, ``try until you succeed'', can be adapted to verify magic states and $\ket{0}$ ancillas needed to perform the circuit $\mathfrak{R}$. The nodes simply try until the verification of an ancilla has at most $t$ errors. 

Performing the distributed encoding of the inputs creates a three-level encoding before the computation phase. The shares initially dealt by dealer $D_i$ are then sent back to $D_i$, who reconstructs them and corrects the errors using the reconstruction step from VQSS (as in reconstruction of MPQC, Protocol \ref{prot:mpqc}). As a result, each node holds a single qubit corresponding to a correctly encoded input state $\rho_i$, with at most $t$ errors confined to the cheaters' positions. The protocol proceeds with the distributed computation, but now the circuit is performed on a single level of encoding. Since the errors are only on the shares held by the cheaters, the errors will not propagate to the honest shares during the computation. Therefore, after the computation it will be possible to reconstruct outputs for the honest nodes. 

Finally, we remark that the distributed encoding can be performed in a sequential way, similar to the execution of VQSS we present in Protocol \ref{prot:vhss}. In fact, this does not increase the qubit workspace per node, each node will not exceed the workspace of $n^2+4n$. However, this approach has significantly higher quantum communication complexity. Specifically, in this version of the protocol, each node needs to send~$\mathcal{O}(n^5s^2)$ qubits.

\section{Acknowledgments}
We thank B.~Dirkse for useful discussions, and K.~Chakraborty and M.~Skrzypczyk for detailed feedback on this manuscript. This work was supported by an NWO VIDI grant, an ERC Starting grant and NWO Zwaartekracht QSC. This project (QIA) has received funding from the European Union’s Horizon 2020 research and innovation program under grant agreement No 820445.

%

\onecolumngrid
\begin{appendix}

\section{Security proof.}\label{app:proofs}

Here we provide the security proof of our protocol based on the simulator definition, see Definition \ref{def:security}. We first construct the ideal protocol step by step and model each operation performed in this protocol by general maps, and finally express the output of this protocol $\omega_{ideal}$ in terms of these maps. Then, we analyze the real protocol and similarly express its output $\omega_{real}$ in terms of the maps modeling the real protocol. Finally, we compare the two outputs, $\omega_{ideal}$ and $\omega_{real}$, and show they are exponentially close in the security parameter $s$.

To prove security of the MPQC protocol, Theorem \cref{thm:security}, we first state the following useful lemma. Intuitively, it says that sharing and verifying the input, performing the distributed circuit and decoding is equivalent to applying the circuit to the inputs directly. Note that we consider the decoding to be ``hypothetical'' -- after the computation phase in MPQC the nodes send all of the shares coming from input of node $i$ to node $i$, and node $i$ reconstructs it.

\begin{lemma}\label{lem:conserv_enc}
Let $B$ be a set of apparent cheaters at the end of the computation phase, such that $|B|\leq t$, and $A$ be a set of cheaters. Let $\mathcal{D}$ denote the decoding procedure for code $\hat{\mathcal{C}}$ and $\hat{\mathcal{D}}$ denote the erasure recovery circuit for code $\hat{\mathcal{C}}$.  If the state $\dbar \rho$ encoded twice with the code $\hat{\mathcal{C}}$ is decodable, i.e.
\begin{align}
 \rho = \bigotimes_{i \in [n]} \left(\hat{\mathcal{D}}_{\overline{B\cup A}} \circ \bigotimes_{\ell \in \overline{B\cup A}} \mathcal{D}_\ell \right) (\dbar \rho),
\end{align}
then applying a logical gate $\dbar G$ ($G\in \tn{Cliff} +T$) on $\dbar \rho$ is also decodable, i.e.
\begin{align}
 G(\rho) = \bigotimes_{i \in [n]} \left(\hat{\mathcal{D}}_{\overline{B\cup A}} \circ \bigotimes_{\ell \in \overline{B\cup A}} \mathcal{D}_\ell \right) \left( \dbar G (\dbar \rho) \right),
\end{align}
where $\dbar G$ is gate $G$ applied transversally on the CSS code $\hat{\mathcal{C}}$ if $G\in \tn{Cliff}$, or it is the implementation of the $T$ gate described in Protocol \ref{prot:gate_teleport} if $G=T$.
The same property holds when replacing $G$ by the projective measurement in the $Z$ basis denoted $P$, and where $\dbar P$  corresponds to measuring each qubit of the double-encoded state in the $Z$ basis followed by broadcasting the outcome classically.
\end{lemma}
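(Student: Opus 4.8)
The plan is to reduce the lemma to a single commutation identity between the decoding map and the logical operation, after which the conclusion is immediate from the hypothesis. Write $\tn{Dec} := \bigotimes_{i \in [n]} \big(\hat{\mathcal{D}}_{\overline{B\cup A}} \circ \bigotimes_{\ell \in \overline{B\cup A}} \mathcal{D}_\ell\big)$ for the full (hypothetical) decoding channel appearing on both sides of the statement. The hypothesis is precisely $\tn{Dec}(\dbar\rho)=\rho$, so it suffices to establish
\begin{align}
\tn{Dec} \circ \dbar G = G \circ \tn{Dec}
\end{align}
as channels acting on states whose errors are confined to positions in $B\cup A$ at each level of encoding. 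Granting this, $\tn{Dec}(\dbar G(\dbar\rho)) = G(\tn{Dec}(\dbar\rho)) = G(\rho)$, and identically for $P$. The entire content is therefore the commutation relation together with the fact that $\dbar G$ preserves the \emph{decodable} (low-error, correctly-supported) structure assumed in the hypothesis.

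First I would treat the Clifford case, $\dbar G = G^{\otimes n^2}$. I exploit the two-level block structure: the $n^2$ physical shares split into $n$ blocks indexed by the first-level position, and the inner decoder $\bigotimes_{\ell \in \overline{B\cup A}} \mathcal{D}_\ell$ acts block-wise. Because $\ctr$ has transversal Cliffords (Section \ref{subsec:CSS}), on the code space $\mathcal{D}_\ell \circ G^{\otimes n} = G \circ \mathcal{D}_\ell$, so transversal $G$ pushes through the inner decoder and leaves $G^{\otimes n}$ acting on the first-level encoding; one further use of transversality pushes it through $\hat{\mathcal{D}}_{\overline{B\cup A}}$, yielding $G$ on the logical qubit. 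The point guaranteeing that the \emph{same} decoder still succeeds is that $G$ is Clifford: it maps Pauli errors to Pauli errors qubit-wise, hence neither spreads errors between the $n^2$ shares nor moves any error off the positions in $B\cup A$. The error confinement of the hypothesis is thus preserved, the erasure- and error-correction steps inside $\tn{Dec}$ still recover exactly, and the commutation relation follows for Cliffords (the multi-input case, e.g.\ a CNOT acting across two of the inputs indexed by the outer $\bigotimes_{i\in[n]}$, is identical since it too is transversal).

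For $G=T$, I would decompose the gate-teleportation implementation (Protocol \ref{prot:gate_teleport}) into its constituent operations — a transversal CNOT between $\ket{\dbar m}$ and $\dbar\rho$, a transversal $Z$-measurement of the target, and a conditional transversal $XP^\dagger$ correction — each of which is transversal and so commutes with $\tn{Dec}$ by the Clifford argument (the measurement piece handled exactly as the $P$ case below). Since the magic state has passed VMagic (Protocol \ref{prot:verification_stabilizer}), $\tn{Dec}(\ket{\dbar m})=\ket{m}$, and feeding the decoded states through the logical teleportation circuit produces $T\rho$, giving $\tn{Dec}(\dbar T(\dbar\rho)) = T(\rho)$. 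The measurement case is the classical analogue: qubit-wise $Z$-measurement of a CSS codeword yields a word of the classical code $V$, and decoding it twice (through the second- then first-level classical decoders, ignoring $B\cup A$) returns the logical outcome, matching $P(\rho)$. The main obstacle I anticipate is the error bookkeeping across the two nested levels in the $T$ case: one must verify that neither the transversal CNOT with the (verified but still encoded) magic state nor the classically-conditioned $XP^\dagger$ correction can propagate a cheater's error from a position in $B\cup A$ onto an honest share, so that the confinement assumption underlying $\tn{Dec}$ survives each step of the teleportation. Transversality is what ultimately rules this out, but it must be checked level-by-level rather than invoked globally.
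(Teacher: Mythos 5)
Your proposal is correct and follows essentially the same route as the paper's proof: transversality of Clifford gates on the shares outside $B\cup A$, transversal qubit-wise $Z$-measurement for CSS codes, and transfer of these properties to the $T$ gate by decomposing the gate-teleportation circuit into a transversal CNOT, a $Z$-measurement, and a Clifford correction. Your version is considerably more explicit than the paper's (which states the argument in a few lines), in particular in isolating the commutation relation $\tn{Dec}\circ\dbar G = G\circ\tn{Dec}$ and in flagging the preservation of error confinement, both of which the paper leaves implicit.
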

\begin{proof}
The lemma follows from the fact that to realize a logical gate $\dbar G$ it is sufficient to apply $G$ honestly on shares in $\overline{B\cup A}$. Indeed, applying a Clifford gate transversally on shares in $\overline{B\cup A}$ realizes a logical Clifford gate \cite{Gottesman1998}. For a CSS code $\hat{\mathcal{C}}$ measuring each qubit in the $Z$ basis and broadcasting the measurement result realizes the logical transversal measurement. Additionally, we implement the $T$ gate by composing an ancilla state, $Z$ measurement and a Clifford operation. Therefore, the transversal properties of Cliffords and $Z$ measurement can be transferred to this implementation of the $T$ gate.
\end{proof}

\begin{property}\label{prop:prop1}
Let $\mathfrak{R}$ be a circuit implementing a completely positive trace preserving (CPTP) map. Lemma \ref{lem:conserv_enc} holds when replacing $G$ by any circuit $\mathfrak{R}$,
\begin{align}
 \mathfrak{R}(\rho) = \bigotimes_{i \in [n]} \left(\hat{\mathcal{D}}_{\overline{B\cup \bar{H}}} \circ \bigotimes_{\ell \in \overline{B\cup \bar{H}}} \mathcal{D}_\ell \right) \left( \dbar{\mathfrak{R}} (\dbar \rho) \right).
\end{align}
This follows from the fact that any circuit $\mathfrak{R}$ can be represented as $\mathfrak{R} = P \circ \mathcal{U}$, where $\mathcal{U}$ can be decomposed into gates from the set Cliff$+T$ and $P$ is a measurement. 
\end{property}

As a reminder, let us restate the security of our MPQC protocol.

\securityMPQC*

\begin{proof}[Proof of \cref{thm:security}] 

This proof is inspired by the approach taken in \cite{Crepeau2002,Smith2001}. In the following we construct a proof aiming to show that the outputs of the real and ideal protocol are $\epsilon$-close. We first construct an ideal protocol using a simulator approach and formally state every step of the simulation. Then we formally write the execution of the real protocol.

\begin{framed}
 \noindent Box 1. Registers used in the security proof. \\

\noindent \textit{Ideal protocol:}\\
$H_S$ --  registers of ``dummy'' inputs of the honest nodes in the simulation\\
$A_S$ -- registers of the cheaters' inputs\\
$H_0$ --  registers of the simulated honest nodes \\
$A_0$ -- registers of the simulated cheaters.\\

\noindent \textit{Real protocol:}\\
$H_R$ --  registers of honest nodes \\
$A_R$ -- registers of cheaters.
\end{framed}

\begin{figure}[h]
\ffigbox{%
 \includegraphics[width=0.8\textwidth]{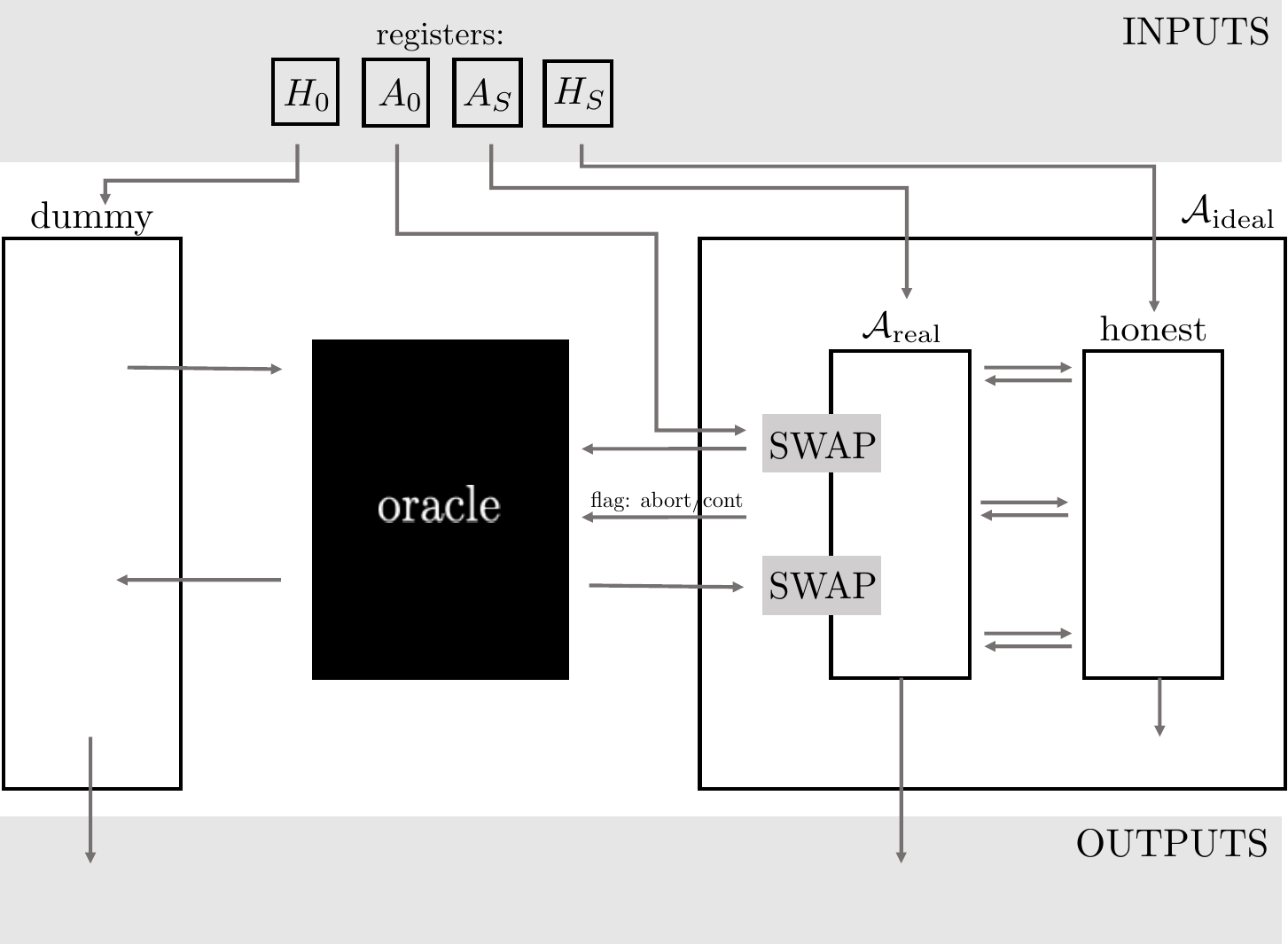}%
}{
\caption{Schematic of our simulator-based security proof of the MPQC protocol, Protocol \ref{prot:mpqc}.}\label{fig:simulator}
}
\end{figure}

\vspace{1em}
\noindent \textbf{Ideal protocol.} $\mathcal{A}_{\rm ideal}$ will locally simulate the MPQC protocol, Protocol \ref{prot:mpqc}, with honest nodes interacting with the cheaters. The cheaters are  controlled by $\mathcal{A}_{\rm real}$ and $\mathcal{A}_{\rm real}$ is simulated within $\mathcal{A}_{\rm ideal}$, see Figure \ref{fig:simulator}. In the ideal model $\mathcal{A}_{\rm ideal}$ and the honest nodes interact with an oracle that perfectly realizes the MPQC task and is incorruptible. The oracle requires two types of inputs: first the input registers $H_S,A_0$ on which the computation of the circuit will occur, second a flag input that indicates whether the oracle should abort or continue. If the flag input is ``abort'' the oracle outputs $\ketbra{\perp}$. If the the flag input is ``continue'' the oracle outputs the evaluation of circuit $\mathfrak{R}$ on the inputs $H_S A_0$. At any moment of this simulated execution, the ideal adversary has access to all the simulated registers, in particular, the set $B$ of apparent cheaters.
Let the input to the simulation be 
\begin{align}
\rho_{H_SA_S} \otimes \ketbra{0}_{H_0A_0} ,
\end{align}
where $\rho_{H_SA_S} $ denotes the input state of all nodes, such that $\tr_{[n] \setminus i}\left(\rho_{H_SA_S}\right) = \rho_i$.

1. $\mathcal{A}_{\rm ideal}$ locally simulates sharing and verification with simulated honest nodes using $\ket{0}$ as their input. The input registers $H_0A_S$ given to $\mathcal{A}_{\rm ideal}$ is forwarded to the simulated $\mathcal{A}_{\rm real}$, i.e.

\begin{align} \label{eq:SV_on_input}
\sigma^{(1)}_{H_0A_0H_SA_S} = 
 \mathcal{SV}_{H_0 A_S}
\left( \rho_{H_SA_S} \otimes \ketbra{0}_{H_0A_0}\right),
\end{align}
where $\mathcal{SV}_{H_0 A_S}$ denotes the sharing and verification (see Protocol \ref{prot:mpqc}) performed on registers $H_0$ and $A_S$. We assume that the identity operation is applied on all the registers that are not in the map  $\mathcal{SV}$, i.e. $\mathds{1}_{H_S A_0}$.

2. Before $\mathcal{A}_{\rm ideal}$ proceeds with the simulation of the computation phase, for each input of the cheaters $\mathcal{A}_{\rm ideal}$, creates an encoding of $\ket{0}$ in register $A_0$. Then $\mathcal{A}_{\rm ideal}$ performs a swap gate between $A_0$ and cheaters' input $A_S$.
\begin{itemize}
\item In the case when the set $|B| \leq t$, there are sufficiently few errors on both levels of encoding. Then $\mathcal{A}_{\rm ideal}$ can apply an erasure-recovery circuit twice (for the double encoding), denoted $\tilde{\mathcal{D}}_{A_0}$, to the input of nodes not in $B$ and pass it to the oracle. Applying decoding $\tilde{\mathcal{D}}_{A_0}$ is necessary, since the oracle accepts only single-qubit inputs. 
\item Otherwise, when $|B| > t$, $\mathcal{A}_{\rm ideal}$ simply passes previously prepared $\ket{0}$ states as inputs of the cheaters to the oracle and the simulated honest nodes $H_S$ replace their shares with $\ket{0}$. The simulated cheaters apply an arbitrary map $\mathcal{M}_{A_S}$ to their shares.
\end{itemize}
We therefore describe this step as
\begin{align}
\sigma^{(2)}_{H_0A_0H_SA_S} = 
\begin{cases}
 \tilde{\mathcal{D}}_{A_0} \circ \tn{Swap}_{A_0A_S} \circ  \mathcal{E}_{A_0}  (\sigma^{(1)}_{H_0A_0H_SA_S})  &\tn{~if~} |B| \leq t\\
 \mathcal{M}_{A_S}\otimes\tr_{H_0}[\sigma^{(1)}_{H_0A_0H_SA_S}] \otimes \ketbra{0}_{H_0} & \tn{~if~} |B| > t.
\end{cases}
\end{align}

3. $\mathcal{A}_{\rm ideal}$ proceeds with the simulation of the computation phase on registers $H_0$ and $A_S$. At the same time, the oracle computes the ideal circuit $\mathfrak{R}^{\rm ideal}_{H_S A_0}$ on the simulated honest shares $H_S$ and register $A_0$ of the cheaters. The state after this step is therefore,
\begin{align} \label{eq:ideal_step2}
\sigma^{(3)}_{H_0A_0H_SA_S} = 
\begin{cases}
 (\mathfrak{R}^{\rm ideal}_{H_S A_0} \otimes \dbar{\mathfrak{R}}_{H_0 A_S}) (\sigma^{(2)}_{H_0A_0H_SA_S})  &\tn{~if~} |B| \leq t\\
 (\mathfrak{R}^{\rm ideal}_{H_S A_0} \otimes \dbar{\mathfrak{R}}_{H_0 A_S}) (\sigma^{(2)}_{H_0A_0H_SA_S}) &\tn{~if~} |B| > t.
\end{cases}
\end{align}

4. If $|B| > t$, $\mathcal{A}_{\rm ideal}$  sends the flag ``abort'' to the oracle, and ``continue'' otherwise. 
\begin{itemize}
\item If the oracle receives ``abort'' it outputs a flag $\ketbra{\perp}$ to all nodes. 
\item Otherwise, it outputs the computation of the ideal circuit on the inputs.
\end{itemize}

5. The nodes in $H_S$ output whatever they received from the oracle. Upon receiving the oracle's output, $\mathcal{A}_{\rm ideal}$ does the following: 
\begin{itemize}
\item if ``abort'' was sent in the previous step, then it must be that $|B|>t$. The simulated protocol aborts. Therefore, $\mathcal{A}_{\rm ideal}$ outputs the output of the $\mathcal{A}_{\rm real}$. Note that the simulated cheaters could have applied an arbitrary map $\mathcal{M}'_{A_S}$ on their register.
\item if ``continue'' sent in the previous step, then $\mathcal{A}_{\rm ideal}$ applies double encoding $\mathcal{E}_{A_0}$ to all shares of the cheating nodes $A_0$. Then $\mathcal{A}_{\rm ideal}$ applies the swap gate between the simulated registers of cheaters $A_S$ and $A_0$, and proceeds to the next step.
\end{itemize} 
\begin{align}\label{eq:ideal_step5}
\begin{cases}
 \tn{Swap}_{A_0A_S} \circ \mathcal{E}_{A_0} \circ (\mathfrak{R}^{\rm ideal}_{H_S A_0} \otimes \dbar{\mathfrak{R}}_{H_0 A_S})   (\sigma^{(2)}_{H_0A_0H_SA_S}) \otimes \ketbra{\tn{cont}}  &\tn{~if~} |B| \leq t\\
\ketbra{\perp}_{H_SA_0} \otimes \tr_{H_SA_0}\left[ \mathcal{M}'_{A_S} (\sigma^{(3)}_{H_0A_0H_SA_S}) \right]  \otimes \ketbra{\tn{abort}} &\tn{~if~} |B| > t.
\end{cases}
\end{align}
Let us denote by $\sigma^{(5)}_{H_0A_0H_SA_S}$ the following and use the explicit form of $\sigma^{(2)}_{H_0A_0H_SA_S}$ for $|B| \leq t$, Equation \eqref{eq:ideal_step2},
\begin{align}
\sigma^{(5)}_{H_0A_0H_SA_S} =  \tn{Swap}_{A_0A_S} \circ \mathcal{E}_{A_0} \circ (\mathfrak{R}^{\rm ideal}_{H_S A_0} \otimes \dbar{\mathfrak{R}}_{H_0 A_S}) \circ \tilde{\mathcal{D}}_{A_0} \circ \tn{Swap}_{A_0A_S} \circ  \mathcal{E}_{A_0} (\sigma^{(1)}_{H_0A_0H_SA_S}).
\end{align}
We will now simplify the above expression. For this we first state the following useful property.
 
\vspace{1em}
\begin{property}
For any operation $\mathcal{O}_{ABCD}$ on registers $ABCD$, the following identity holds,
\begin{align}
\tn{Swap}_{BC} \circ \mathcal{O}_{ABCD} \circ \tn{Swap}_{BC} = \mathcal{O}_{ACBD}.
\end{align}
\end{property}

\vspace{1em}
Using this property for $\sigma^{(5)}_{H_0A_0H_SA_S}$ we get that 
\begin{align}
\tn{Swap}_{A_0A_S} \circ \mathcal{E}_{A_0} \circ (\mathfrak{R}^{\rm ideal}_{H_S A_0} \otimes \dbar{\mathfrak{R}}_{H_0 A_S}) \circ \tilde{\mathcal{D}}_{A_0} \circ \tn{Swap}_{A_0A_S} \circ  \mathcal{E}_{A_0}
=
 \mathcal{E}_{A_S} \circ \mathfrak{R}^{\rm ideal}_{H_S A_S} \circ \tilde{\mathcal{D}}_{A_S} \otimes \dbar{\mathfrak{R}}_{H_0 A_0} \circ  \mathcal{E}_{A_0}.
\end{align}
This means that that the composition of the swaps with the ideal circuit performed by the oracle is equivalent to applying the ideal circuit to registers $H_S A_S$ by the oracle.
Therefore, we can simplify $\sigma^{(5)}_{H_0A_0H_SA_S}$ to
\begin{align}
\sigma^{(5)}_{H_0A_0H_SA_S}= (\mathcal{E}_{A_S} \circ \mathfrak{R}^{\rm ideal}_{H_S A_S} \circ \tilde{\mathcal{D}}_{A_S}) \otimes (\dbar{\mathfrak{R}}_{H_0 A_0} \circ  \mathcal{E}_{A_0} )(\sigma^{(1)}_{H_0A_0H_SA_S}).,
\end{align}
and using Equation \eqref{eq:SV_on_input} we obtain,
\begin{align}
 \sigma^{(5)}_{H_0A_0H_SA_S} & = (\mathcal{E}_{A_S} \circ \mathfrak{R}^{\rm ideal}_{H_S A_S} \circ \tilde{\mathcal{D}}_{A_S}) \otimes (\dbar{\mathfrak{R}}_{H_0 A_0} \circ  \mathcal{E}_{A_0} \circ \mathcal{SV}_{H_0A_S}) \left( \rho_{H_SA_S} \otimes \ketbra{0}_{H_0A_0}\right) \\
 &\label{eq:sigma_4} = \left( \mathcal{E}_{A_S} \circ \mathfrak{R}^{\rm ideal}_{H_S A_S}\circ \tilde{\mathcal{D}}_{A_S} \circ \mathcal{SV}_{A_S} \left( \rho_{H_SA_S} \right) \right) \otimes \left( \dbar{\mathfrak{R}}_{H_0 A_0} \circ \mathcal{E}_{A_0} \circ \mathcal{SV}_{H_0} \left( \ketbra{0}_{H_0A_0}\right) \right).
\end{align}

6. If the protocol did not abort, $\mathcal{A}_{\rm ideal}$ proceeds to the reconstruction phase, in which the simulated honest nodes $H_0$ first use the decoding procedure for code $\hat{\mathcal{C}}$ and then apply an erasure recovery circuit, as in the reconstruction phase of Protocol \ref{prot:mpqc}. We denote this procedure by $\tilde{\mathcal{D}}_{H_0}$. On the other hand, the simulated cheaters $A_S$ apply an arbitrary map $\mathcal{W}_{A_S}$. $\mathcal{A}_{\rm ideal}$ outputs whatever is the output of the simulated $\mathcal{A}_{\rm real}$. Therefore, the output of the ideal protocol is
\begin{align}
\omega_{\rm ideal } = \tr_{H_0 A_0} \left[\tilde{ \mathcal{D}}_{H_0} \otimes \mathcal{W}_{A_S} (\sigma^{(5)}_{H_0A_0H_SA_S})\right].
\end{align}
Using Equation \eqref{eq:sigma_4} and the fact that the sharing and verification followed double decoding, $\tilde{\mathcal{D}}_{A_S} \circ \mathcal{SV}_{A_S}$, is equivalent to $\mathds{1}_{A_S}$, we obtain,
\begin{align}
\omega_{\rm ideal } = \mathcal{W}_{A_S} \circ \mathcal{E}_{A_S} \circ \mathfrak{R}^{\rm ideal}_{H_S A_S}\left( \rho_{H_SA_S} \right).
\end{align}
Similarly, to later compare with the real protocol, we write the identity map on $H_S$ as $\mathds{1}_{A_S} = \tilde{\mathcal{D}}_{H_S}\circ \mathcal{E}_{H_S}$, and get
\begin{align} \label{eq:omega_ideal}
\omega_{\rm ideal } = (\tilde{\mathcal{D}}_{H_S}\otimes \mathcal{W}_{A_S}) \circ \mathcal{E}_{H_SA_S} \circ \mathfrak{R}^{\rm ideal}_{H_S A_S}\left( \rho_{H_SA_S} \right).
\end{align}

\vspace{2em}
\noindent \textbf{Real protocol.} 

In the real protocol whenever the honest nodes observe $|B|>t$ they replace all of their shares with $\ket{0}$. This is necessary because in the ideal protocol the oracle receives ``abort'' at the end of the computation phase. Therefore, in the real protocol we also abort at the end of the computation phase. However, it could happen that in the case when $|B|>t$ continuing the computation leaks some information about the honest nodes' inputs. To avoid this situation, we make the honest nodes substitute their shares with $\ket{0}$.

1. The protocol starts with the sharing and verification phase, which we describe by the map $\mathcal{SV}$ acting on inputs of all the nodes $\rho_{H_RA_R}$.  The state after this step is
\begin{align}
\mathcal{SV}_{ H_RA_R} \left( \rho_{H_RA_R}\right).
\end{align}

2. The protocol continues; 
\begin{itemize}
\item In the case when $|B| \leq t$, the nodes apply the distributed circuit $\dbar{\mathfrak{R}}_{ H_RA_R}$.
\item In the case when $|B| > t$, the honest nodes replace their shares with $\ket{0}$ and the cheaters apply an arbitrary map $\mathcal{M}_{A_R}$.
\end{itemize}
  At the end of the computation phase the state is therefore, 
\begin{align}
\sigma^{(2)}_{H_RA_R}=
\begin{cases}
\dbar{\mathfrak{R}}_{ H_RA_R} \circ \mathcal{SV}_{ H_RA_R} \left( \rho_{H_RA_R}\right) &\tn{~if~} |B| \leq t,\\
\mathcal{M}_{A_R}(\tr_{H_R}[\mathcal{SV}_{ H_RA_R} \left( \rho_{H_RA_R}\right)]) \otimes \ketbra{0}_{H_R} & \tn{~if~} |B| > t.
\end{cases}
\end{align}

3. The nodes check the size of set $B$. 
\begin{itemize}
\item If $|B| \leq t $ then the protocol continues to the reconstruction phase, where the honest nodes apply first a decoding operator for code $\hat{\mathcal{C}}$ and then an interpolation circuit, denoted $\mathcal{D}_{H_R}$. At the same time, the cheaters can apply an arbitrary map on their registers, which we denote $\mathcal{W}_{A_R}$. 
\item In the case when $|B| > t$, the nodes output the abort flag $\ketbra{\perp}$ and the cheaters output their part of $\sigma^{(2)}_{H_RA_R}$, possibly with an arbitrary map $\mathcal{M}'_{A_R}$. The protocol aborts. 
\end{itemize}
We can describe this step as,
\begin{align}\label{eq:real_step3}
\begin{cases}
(\mathcal{D}_{H_R} \otimes \mathcal{W}_{A_R})\circ \dbar{\mathfrak{R}}_{ H_RA_R} \circ \mathcal{SV}_{ H_RA_R} \left( \rho_{H_RA_R}\right) \otimes \ketbra{\tn{cont}} &\tn{~if~} |B| \leq t,\\
\ketbra{\perp}_{H_R} \otimes \mathcal{M}'_{A_R}(\tr_{H_R}[\sigma^{(2)}_{H_RA_R}]) \otimes \ketbra{\tn{abort}}&\tn{~if~} |B| > t.
\end{cases}
\end{align}
We introduce the identity map as encoding followed by double encoding on both registers, i.e. $\mathds{1}_{H_RA_R} = \tilde{\mathcal{D}}_{H_RA_R} \circ \mathcal{E}_{H_RA_R}$. Then, plugging this $\mathds{1}_{H_RA_R}$ between $\dbar{\mathfrak{R}}_{H_RA_R}$ and $(\mathcal{D}_{H_R} \otimes \mathcal{W}_{A_R})$, the first case can be rewritten as, 
\begin{align}
\omega_{\tn{real}} = (\mathcal{D}_{H_R} \otimes \mathcal{W}_{A_R})\circ \mathcal{E}_{H_RA_R} \circ \tilde{\mathcal{D}}_{H_RA_R} \circ \dbar{\mathfrak{R}}_{H_RA_R}\circ \mathcal{SV}_{H_RA_R} \left( \rho_{H_RA_R}\right),
\end{align}
which defines the output of the real protocol when it does not abort.

Now we aim to simplify $\omega_{\tn{real}}$ to compare it to the output of the ideal protocol. 
Our goal is to show that sharing and verifying the input, performing the distributed circuit and decoding is equivalent to applying the circuit to the inputs directly,
\begin{align}
	\tilde{\mathcal{D}}_{H_RA_R} \circ \dbar{\mathfrak{R}}_{H_RA_R}\circ \mathcal{SV}_{H_RA_R} \left( \rho_{H_RA_R} \right)=\mathfrak{R}_{H_RA_R} (\rho_{H_RA_R}).
\end{align}
Indeed, this follows from Lemma \ref{lem:conserv_enc} and Property \ref{prop:prop1}. By security of the VQSS \cite{Crepeau2002,Smith2001,Lipinska2019}, if the protocol does not abort, there exists a unique double-encoded state after the verification phase, i.e. $\mathcal{SV}_{H_RA_R} \left( \rho_{H_RA_R} \right)$. By definition the decoding  $\tilde{\mathcal{D}}_{H_RA_R}$ is exactly the one performed in Lemma \ref{lem:conserv_enc}. 
Therefore, we have that
\begin{align}
\omega_{\tn{real}}& = (\mathcal{D}_{H_R} \otimes \mathcal{W}_{A_R})\circ \mathcal{E}_{H_RA_R} \circ \tilde{\mathcal{D}}_{H_RA_R} \circ \dbar{\mathfrak{R}}_{H_RA_R}\circ \mathcal{E}_{H_RA_R} \left( \rho_{H_RA_R}\right)\\ 
&=(\mathcal{D}_{H_R} \otimes \mathcal{W}_{A_R})\circ \mathcal{E}_{H_RA_R} \circ \mathfrak{R}_{H_RA_R}\left( \rho_{H_RA_R}\right).
\end{align}
This, together with Equation \eqref{eq:omega_ideal}, gives us that the outputs of the ideal and real protocol are equal for $|B| \leq t$,
\begin{align}
\omega_{\tn{ideal}} = \omega_{\tn{real}}.
\end{align}
Similarly, when $|B|>t$, one can compare  \eqref{eq:ideal_step2} with \eqref{eq:ideal_step5} and obtain that the states are the same for the real and ideal protocol.
\vspace{1em}
What we described so far, considers that the encoding in the sharing phase was performed correctly in the real protocol. However, this does not have to be the case. Every verification performed during the MPQC
has a probability of error inherited from the VQSS. Recall that from Lemma \ref{lem:vqss_sound} the probability of unsuccessful verification in VQSS is lower-bounded by $2^{-\Omega(s)}$. In MPQC we verify:
\begin{itemize}
\item each of the $n$ inputs, 
\item each $\ket{\dbar 0}$ and $\ket{\dbar m}$ necessary to perform the $T$ gate,
\item each $\ket{\dbar 0}$ necessary for the circuit $\mathfrak{R}$
\end{itemize}
Let $\kappa = n + \# T$ gates $+ \#$ ancillas for $\mathfrak{R}$. Then the total probability of error in MPQC is $\kappa 2^{-\Omega(s)}$.

\end{proof}

\end{appendix}

\newpage

\begin{titlepage}
\centering
{\large\bfseries Erratum: Secure multi-party quantum computation with few qubits}

\vspace{1em}

\centering
{\large Victoria Lipinska,$^{1,2}$ Jeremy Ribeiro,$^{1,2}$ and Stephanie Wehner$^{1,2}$}\\
\vspace{2mm}
{\it $^1$QuTech, Delft University of Technology, Lorentzweg 1, 2628 CJ Delft, The Netherlands}\break
{\it $^2$Kavli Institute of Nanoscience, Delft University of Technology, Lorentzweg 1, 2628 CJ Delft, The Netherlands}

\vspace{2em}
\end{titlepage}

We, authors of the article \cite{Lipinska2020}, wish to present this erratum to correct an error we made in a subprotocol verifying magic states. Specifically, the circuit used to implement the verification cannot be implemented transversally for the chosen family of codes. In this erratum we explain the error in detail and \steph{propose a solution based on already existing techniques}, i.e. distributed magic state distillation. This approach increases the operational qubit workspace per node  from $n^2 + 4n$ to $n^2+\Theta(s) n$, where $s$ is the security parameter. The increase is linear in $n$, which means that the main result of our paper remains intact: number of qubits per node necessary to implement the multiparty quantum computation \change{is still smaller than} the previous protocol \cite{Crepeau2005}. \steph{Moreover, the security proof in our manuscript does not change.}

\setcounter{section}{0}
\section{The issue}
In the manuscript \cite{Lipinska2020} we presented a protocol for distributed multiparty quantum computation (MPQC) of universal circuits. Our method is based on quantum error correction and we choose a subfamily of Calderbank-Steane-Shor (CSS) $\ctr$ which allows for a transversal implementation of the Clifford gates. \change{Tranversality is essential since we require that any operation $\Lambda$ implemented by the nodes locally should yield the same operation $\bar{\Lambda}$ at a logical level.}

To implement any circuit we supplement the Clifford gates with logical magic states $\ket{\dbar{m}}$, which need to be verified, i.e. the nodes must collectively agree that there are at most $t$ errors in the logical magic state. After this step the nodes can use the magic state to implement the $T$ gate \change{transversally} using gate teleportation. 

The method we chose for the verification procedure is based on the so called stabilizer measurement. Unfortunately, the controlled gate $C$-$XP^\dagger$ implementing the stabilizer measurement is \emph{not} a Clifford gate, and for our chosen family of codes $\ctr$ this gate is not transversal. This means that the procedure cannot be implemented transversally \change{and the magic state cannot be verified with this procedure.}

\section{The solution}
To solve the problem we propose a new protocol for the verification of magic states. Our new protocol is based on magic state distillation \cite{Bravyi2005} and statistical testing of randomly selected states. We note that this is a method inspired by distillation of entangled pairs whose exact initial state is unknown \cite{Pirker2017}. A similar approach has recently been reported by \cite{Dulek2020}. In the following we first describe the new protocol on a high level and then provide a detailed description. 

In the magic state verification procedure one node produces $M$ copies of the magic state $\ket{m}$. The nodes share and verify the \change{encoding of the} state using the verifiable secret sharing protocol, see Protocol 1 in the manuscript \cite{Lipinska2020}. Then, using public randomness, the nodes pick a fraction of states \change{to be statistically tested. For each picked state, the nodes select a random node who reconstructs the shared state and measures it in the $\{\ket{m},\ket{m^\perp}\}$ basis.}
If all of the measurement results yield $\ket{m}$ then the nodes have statistical evidence that the rest of the states are good with high probability. Next, the nodes perform dephasing in the $\{\ket{m},\ket{m^\perp}\}$ basis on the remaining states. \change{This is done by randomly applying $PX$ gate, which is a Clifford gate, and therefore can be implemented transversally with the code $\ctr$.} This step is necessary, since to perform magic state distillation the initial states must have a diagonal form in the  $\{\ket{m},\ket{m^\perp}\}$ basis. After this, the nodes perform magic state distillation using the 15-to-1 Bravyi-Kitaev protocol \cite{Bravyi2005}. \change{Importantly, this protocol can also be realized using only Clifford gates and measurements, both of which can  be implemented essentially transversally for the chosen family of codes. Note that the distillation procedure can be performed many times to get arbitrarily close to the perfect magic state.} 

We remark that the statistical testing is necessary to assess the quality of magic states before distillation, since the states can be produced by cheaters. If this was the case, performing the distillation straightforwardly would not give a guarantee on the quality of the final magic state.

\section{Implications for the results}

Here we point out the implications of the change we make to the overall results of our work. 

\setcounter{outline}{0}
\begin{framed}
\begin{outline}[\steph{Multi-party quantum computation, repeated from manuscript \cite{Lipinska2020}}]~\\
Input: single-qubit state $\rho_i$ from each node, CSS code $\ctr$ with transversal Cliffords, circuit $\mathfrak{R}$.
\begin{enumerate}
\item \textit{Sharing and verification}\\
Each node $i=1,\dots,n$ encodes her input $\rho_i$ using code $\ctr$ into an $n$-qubit logical state, and sends one qubit (i.e. one single-qubit share) of the logical state to every other node, while keeping one for herself.
The nodes jointly verify the encoding done by node $i$ using verifiable quantum secret sharing protocol (see Protocol 1, manuscript \cite{Lipinska2020}).
\item \textit{Computation}
\begin{itemize}[topsep=0pt,itemsep=-1ex,partopsep=1ex,parsep=1ex]
\item For every Clifford gate in circuit $\mathfrak{R}$:\\
The nodes apply transversal Clifford gates locally to qubits specified by the circuit $\mathfrak{R}$.\\
\item For every $T$ gate in circuit $\mathfrak{R}$ applied to qubit $i$ the nodes run Verification of Magic States, Protocol \ref{prot:verification} (\textbf{the new protocol}). If the verification is successful, the nodes perform Distributed Gate Teleportation, see Protocol 2 in manuscript \cite{Lipinska2020}. 
\end{itemize}
Every $\ket{0}$ ancilla state required for circuit $\mathfrak{R}$, which is prepared by node $i$, is jointly verified by the nodes using verifiable quantum secret sharing, Protocol Protocol 1, manuscript \cite{Lipinska2020}.

If the verification of any step fails the nodes substitute their shares for $\ket{0}$ and abort the protocol at the end of the computation.

\item \textit{Reconstruction}\\
Each node $i$ collects all shares of her part of the output. She corrects errors using code $\ctr$ and reconstructs her output.

\end{enumerate}

\end{outline}
\end{framed}

Let $M$ be the number of magic states produced for each $T$ gate of the circuit one wishes to execute throughout the MPQC. Let $k$ be the number of these states that is measured in Testing of Protocol \ref{prot:verification}, see below. 

\begin{itemize}
    \item  \textit{Security of MPQC.} The security of our MPQC protocol remains unchanged. Theorem \ref{Thm:security} below ensures that conditioned on not aborting and for $M=\Theta(s)$ and $k=\Theta(s)$, the state after verification is $2^{-\Omega(s)}$ close to a logical magic state, where $s$ is the security parameter. This means that, as before, the overall security of the MPQC protocol can be quantified with a total error probability of $\kappa 2^{-\Omega(s)}$, where $\kappa = n + \#T$ gates $+\#$ancillas in a circuit executed in MPQC, see Theroem 1 in the manuscript \cite{Lipinska2020}.
    
    \item \textit{Qubit workspace.} The new scaling for the qubit workspace only has a linear overhead as compared to the previous version, and it remains lower than the previous result by \cite{Crepeau2005}. The qubit workspace required per node is now $n^2 + (M+2)n=n^2+\Theta(s) n$ as opposed to the previously derived $n^2 + 4n$. As before, sharing and verifying the $n$ input qubits uses at most $n^2+2n$ qubits. However, we also must consider the resources needed for the magic state verification. When sharing the magic state, the nodes already hold $n^2$ qubits corresponding to the inputs, and the magic state distribution and verification requires $(M+2)n$ qubits.
    
    \item \textit{Quantum communication complexity.} The quantum communication complexity is essentially unchanged. With the new verification method each node sends  $((M-1)s^2+ k )n \steph{\cdot \#T} = \Theta(s) n s^2 \cdot \#T$ extra qubits. This means that the communication complexity per node is now $\mathcal{O}((n + \# \tn{ancillas} + M \cdot \# T)ns^2 +k n \cdot \#T) =\mathcal{O}((n + \# \tn{ancillas} + \Theta(s) \cdot \# T)ns^2) $, as opposed to the previous $\mathcal{O}((n + \# \tn{ancillas} + \# T)ns^2)$, where $\# T$ is the number of $T$ gates. 
\end{itemize}
\steph{In our MPQC protocol from manuscript \cite{Lipinska2020} we introduced an “abort”
event. That is, the protocol
could abort if there were more than $t$ errors introduced
by the cheaters, accumulated over all inputs.} \change{For comparison, now our MPQC protocol can abort either because the verification of the magic state aborts or because more than $t$ errors have been introduced by the cheaters, accumulated over all inputs. We remark that the verification procedure below can be repeated until successful in the following way. Every time the verification fails two nodes are removed from the next execution: one node who created the state and one of the nodes who measured $\ket{m^\perp}$. By repeating this $t$ times we would remove at most $2t$ nodes, and with certainty remove all of the cheaters. In the next $(t+1)$-th execution, all the $n-2t$ remaining nodes would be honest and the procedure would necessarily succeed. This would leave the MPQC aborting only in the latter case, i.e. when more than $t$ errors occur.}

\onecolumngrid
\vspace{2em}

\setcounter{protocol}{2}
\begin{framed}
\begin{protocol}[Verification of Magic States (VMagic), new protocol] \label{prot:verification}
~\\
\noindent Input: \change{set of apparent cheaters $B$, number $M$ of magic states to be created, number $k$ of magic states to be measured.}\\ 
\noindent Output: verified logical state $\ket{\dbar{m}}$ 

\vspace{1em}
\textbf{Testing}

\begin{enumerate}
\item A randomly selected node $i$ creates $M$ copies of the magic state $\ket{m}$. 
\item The nodes run verifiable secret sharing protocol \change{using code $\ctr$} (VQSS, Protocol 1 in the manuscript \cite{Lipinska2020}) $M$ times, every time with $\ket{m}$ as an input and with dealer $i$. They update the set $B$ with apparent cheaters $B_{m}$ revealed in verifying each copy of $\ket{m}$.  
\item The nodes use public randomness to decide 
\begin{itemize}
    \item which $k$ of the $M$ copies will be measured;
    \item which node will measure each of the selected $k$ copies. 
\end{itemize} 
\item The nodes send the shares according to the division in the previous step, and use the reconstruction of the VQSS \cite{Crepeau2005} to reconstruct a state. 
\item Each node measures the reconstructed state in the $\{\ket{m},\ket{m^\perp}\}$ basis. They announce the results of the measurement. 
\begin{itemize}
    \item If all measurements yield $\ket{m}$, continue.
    \item If any measurement yields $\ket{m^\perp}$, set $B = [n]$ (this will cause the MPQC protocol to abort after the computation phase). 
\end{itemize}
\end{enumerate}

\textbf{Distillation} \change{\cite[Circuit 2.8]{Dulek2020}}

\begin{enumerate}
    \item The nodes use public randomness to apply $PX$ to each share of the remaining $M-k$ logical states \change{with probability $\tfrac{1}{2}$} (n.b. this brings the logical states into a form diagonal in the $\{\ket{m},\ket{m^\perp}\}$ basis).
    \item The nodes use public randomness to permute the remaining $M-k$ logical states.
    \item The nodes apply the 15-to-1 magic state distillation protocol \cite{Bravyi2005}. Any measurements throughout the protocol are broadcasted and the logical value is reconstructed using verifiable classical secret sharing (like in the verification phase of the VQSS, see Protocol 1 in the manuscript \cite{Lipinska2020}). 
\end{enumerate}
\end{protocol}
\end{framed}
\vspace{2em}
\normalsize
\twocolumngrid

\section{Parameter analysis}

In this section we give technical details of the magic state verification protocol, Protocol \ref{prot:verification}. We point out that the security proof in our manuscript does not change. The only alteration is in the \change{derivation of the} error that can be introduced by the verification of the magic state (this corresponds to a different error in the ``real protocol'' at the end of Appendix A in the manuscript \cite{Lipinska2020}). In the following we derive this error explicitly. 
We will first state a few useful lemmas necessary to prove the security of our new verification protocol. Then we will proceed with stating the desired security in Theorem \ref{Thm:security}. 

Let $M$ be the number of magic states distributed by a randomly selected node; let $k < M$ be the number of copies chosen to be measured by all nodes, and $k'\leq k$ out of all of the measured copies be measured by the honest nodes. \steph{Note that selecting a random node and selecting which copies to measure can be done using already assumed public source of randomness and classical multiparty computation, see manuscript \cite{Lipinska2020}.}
The lemma below states that if a state is close to the subspace of state with a small "Hamming weight", then Protocol \ref{prot:verification} can distill it to a state close to a pure magic state. \steph{Since $\{\ket{m}, \ket{m^{\perp}}\}$ is a basis of a qubit space, it follows that any $M-k$ qubit pure state can be written as a superposition of tensor products of vectors in $\{\ket{m}, \ket{m^{\perp}}\}$.} The Hamming weight then needs to be understood as the maximum number of $\ket{m^{\perp}}$ showing in any term of the superposition  \cite{Bouman2010}.

\setcounter{lemma}{2}
\begin{lemma}[Lemma 2.7 of \cite{Dulek2020}]
Let $ V_\delta  := {\rm  span}\{P_\pi (\ket{T}^{\otimes (M-k)-w} \ket{T^\perp}^{\otimes w}) : \tfrac{w}{M-k} \leq \delta, \pi \in S_{M-k}\}$, where $S_{M-k}$ is the set of permutation of $M-k$ elements, and $P_{\pi}$ is the operator that permutes $M-k$ qubits according to the permutation $\pi$. Let $\Pi_{V_\delta}$ be a projector onto $V_\delta$. Let $\Xi$ be the CPTP map describing the action of Distillation of Protocol \ref{prot:verification}.
Let $\rho$ be a $M-k$ qubit state such that $\Tr(\Pi_{V_\delta} \rho) \geq 1- \epsilon$, then,

\begin{align}
    \big\|\Xi(\rho) - \ketbra{T}{T}\big\|_1 \leq O\Big((M-k)(\sqrt{35} \delta)^{(M-k)^c/2} + \epsilon \Big),
\end{align}
where $c\approx 0.406$ and $\delta$ is chosen such that $\delta \leq 0.14$.
\end{lemma}

Now we will prove a lemma lower-bounding the number  of copies of the magic state that must be measured in Testing, Protocol \ref{prot:verification}, such that at least some minimum number $s'$ of them is measured by honest nodes with high probability.

\begin{lemma}\label{Lmm:How_big_k}
Let $\mu \in (0,1)$, let $s'\geq 1$ be some integer. Let $H:=(1-\tfrac{1}{n}\lfloor \tfrac{n-1}{4} \rfloor) \in [3/4, 1]$. In the testing procedure Protocol \ref{prot:verification}, \steph{if the number of measured copies of the magic state} $k$ satisfies the following,
\begin{align}\label{eq:size_k}
\begin{split}
    k \geq \frac{2 H s' + \ln(\mu^{-1})/2+\sqrt{2 H s' \ln(\mu^{-1})+ \ln^2(\mu^{-1})/4}}{2H^2}
\end{split}    
\end{align}
then,
\begin{align}
    \Pr(k'<s') \leq \mu,
\end{align}
\steph{where $k'$ is the number of copies of the magic state measured by the honest nodes.} 
\end{lemma}
\begin{proof}
 For a sequence of IID Bernoulli random variables $X_1,\ldots, X_n$, and some $\lambda \in [0,1]$ Hoeffding's inequality \cite{Hoeffding1963} insures that,
 \begin{align}
     \Pr \left( \sum_1^k X_i \leq \mathbb{E}\left( \sum_1^k X_i\right) - \lambda k\right) \leq e^{-2 \lambda ^2 k}.
 \end{align}
 In words, the Hoeffding's inequality bounds the probability that the fraction of observed $1$s in the sequence of random variables deviates from its expectation value by more than $\lambda k$. In Protocol \ref{prot:verification}, we can define a Bernoulli variable for each measured copy of the magic state as follows: The random variable $X_i$ takes value $1$ if and only if the copy $i$ is send to an honest node. Therefore, we have $k' = \tfrac{1}{k} \sum_1^k X_i$ and $\mathbb{E}(\tfrac{1}{k} \sum_1^k X_i)=(1-\tfrac{1}{n}\lfloor \tfrac{n-1}{4} \rfloor) = H$. Plugging this in to Hoeffding's inequality we get,
 \begin{align}
     \Pr \left( k' \leq (H - {\lambda}) k\right) \leq e^{-2 {\lambda} ^2 k}.
 \end{align}
 Then by choosing ${\lambda}$ and $k$ such that ${\lambda} = \tfrac{\mathbb{E}(\sum_1^k X_i)-s'}{k}= \tfrac{H k -s'}{k}$ and $e^{-2{\lambda}^2 k} = \mu$ we get,
 \begin{align}
     \Pr(k'<s') \leq \mu,
 \end{align}
 and that $k$ must satisfy,
 \begin{align}
     H^2 k^2 - (2H s' +\ln(\mu^{-1})/2) k +s'^2 \geq 0,
 \end{align}
 from which, by solving the inequality for $k$, we get inequality \eqref{eq:size_k}. \steph{Note that since $\mu$ can take any value in $(0,1)$, the probability $\Pr(k'<s')$ can be made arbitrarily small.}
\end{proof}

Finally we restate a Theorem from \cite{Bouman2010} saying that if Testing of Protocol \ref{prot:verification} does not abort then the state \textit{before} Testing was already close to a space of states with a small Hamming weight.

\begin{lemma}[From Theorem 3 of \cite{Bouman2010}] \label{Lmm:Serge-Ferh}
    Let $\ket{\phi_{AE}} \in (\mathbb{C}^2)^{M}\otimes \mathcal{H}_E$ be  a  quantum state  and  let $\beta=\{\ket{v_0}, \ket{v_1}\}$ be  a  fixed  single-qubit  basis.   If  we  measure $k$ random  qubits  of $\Tr_E(\ketbra{\phi_{AE}})$ in  the $\beta$-basis  and  all  of  the  outcomes  are $\ket{v_0}$,  then  with  probability $1-e^{-\delta^2k}$, we have that
    \begin{align}
    \begin{split}
        \ket{\phi_{AE}} \in {\rm  span}\{& P_\pi (\ket{T}^{\otimes M-w} \ket{T^\perp}^{\otimes w})\otimes \ket{\psi} : \\
        &\tfrac{w}{M} \leq \delta, \pi \in S_{M}, \ket{\psi} \in \mathcal{H}_E\}
    \end{split}
    \end{align}
\end{lemma}

The lemma above has a useful corollary, namely that the state of remaining unmeasured qubits \textit{after} Testing in Protocol \ref{prot:verification} is close to a space of states with a small Hamming weight.

\begin{corollary}[From Lemma \ref{Lmm:Serge-Ferh}]
    Let $\ket{\phi_{AE}} \in (\mathbb{C}^2)^{M}\otimes \mathcal{H}_E$ be  a  quantum state  and  let $\beta=\{\ket{v_0}, \ket{v_1}\}$ be  a  fixed  single-qubit  basis. If among $k$ randomly chosen  qubits  of $\Tr_E(\ketbra{\phi_{AE}})$,  $k'$ of them are correctly measured in the $B$-basis and all of  the  outcomes are $\ket{v_0}$ while $k-k'$ are measured with an arbitrary POVM, then the state $\rho$ on the remaining $(M-k)$ qubits of $\Tr_E(\ketbra{\phi_{AE}})$ is $e^{-\delta^2 k'}$-close to the subspace 
\begin{align}
\begin{split}
    V_{\delta} = {\rm span}\{& P_\pi (\ket{T}^{\otimes (M-k)-w} \ket{T^\perp}^{\otimes w}): \\
    & \tfrac{w}{M-k} \leq \delta, \pi \in S_{M-k}\}.
\end{split}
\end{align}
    In other words, if $\Pi_{V_\delta}$ is a projector on the above subspace, then $\Tr(\Pi_{V_\delta} \rho)\geq 1 - e^{-\delta^2 k'}$.
\end{corollary}

Now we are ready to state the security of our new verification procedure. 
\begin{thm}\label{Thm:security}
 Let $\Gamma$ be the CPTP map describing the action of Testing, Protocol \ref{prot:verification} and $\Xi$ be a CPTP map describing the action of Distillation, Protocol \ref{prot:verification}. Then we have,
 
  \begin{align}\label{eq:final_sec}
  \begin{split}
      \big\| \Xi \circ \Gamma(\rho &)_{|\textnormal{not abort}}  - \ketbra{T} \big\|_1 \\ 
      &\leq O\Big({(M-k)(\sqrt{35} \delta)^{(M-k)^c/2}} + e^{-\delta^2 s'} + \mu \Big),     
  \end{split}
 \end{align}
 where $c\approx 0.406$. Recall that $s$ is the security parameter of the whole MPQC protocol. By setting $M-k = s$, $s'=s$, and $\mu=2^{-s}$ Equation \eqref{eq:final_sec} becomes,
 \begin{align}\label{eq:sec_s}
 \begin{split}
    \big\| \Xi \circ \Gamma(\rho)_{|\textnormal{not abort}} - \ketbra{T} \big\|_1 \leq&  O(2^{-\Omega(s)} + e^{-\delta s^2 } + 2^{-s} )\\
     =& 2^{-\Omega(s)}.
 \end{split}
 \end{align}
 \end{thm}
 Note that, by Lemma \ref{Lmm:How_big_k}, setting $s'=s$ and $\mu=2^{-s}$ forces $k$ to satisfy,
 \begin{align}
 \begin{split}
     k \geq& \frac{2 H s + s/2 + \sqrt{2Hs^2 +s^2/4}}{2H^2} \\
     =& \left(\frac{2 H  + 1/2 + \sqrt{2H +1/4}}{2H^2} \right)  s =  \Theta(s).     
 \end{split}
 \end{align}
Overall, we have that Equation \eqref{eq:sec_s} holds for $M= (M-k) + k = \Theta(s)$.

\section{Acknowledgements}
We thank J.~G.~H\"olting for point us to the error in the original manuscript. We thank J.~Helsen and B.~Dirkse for useful comments and feedback on this erratum.


%

\end{document}